\documentclass[12pt,twoside]{article}
\usepackage{amsfonts}
\usepackage{cite}
 \usepackage{color}
\usepackage{amsmath} \numberwithin{equation}{section}
\usepackage{amsthm}
 \usepackage{graphicx}
\usepackage[titletoc,title]{appendix}
\usepackage{subfigure}

\topmargin 0pt \advance \topmargin by-\headheight \advance
\topmargin by-\headsep \linespread{1.1} \textheight 24cm
\oddsidemargin 0pt \evensidemargin \oddsidemargin \marginparwidth
0.5in \textwidth 16.5cm

\newtheorem{lemma}{Lemma}[section]

\newtheorem{definition}{Definition}[section]
\newtheorem{theorem}[definition]{Theorem}

\newtheorem{remark}{Remark}[section]

\newcommand{\gtheta}{\frac{\Theta(t)}{1+\alpha \Theta^2(t)}}
\newcommand{\gthetaa}{\frac{\Theta}{1+\alpha \Theta^2}}
\newcommand{\gthetaaa}{\frac{\Theta^*}{1+\alpha (\Theta^*)^2}}

\newcommand{\mk}[1][k]{\ensuremath{\langle #1 \rangle}}
\begin{document}
\title{ Global stability of a network-based SIRS epidemic model with nonmonotone incidence rate}%
\author{Lijun Liu$^{1}$ \thanks{Corresponding author. \newline E-mail address:   manopt@163.com (L. Liu)}, \quad Xiaodan Wei$^{2}$, \quad Naimin Zhang$^3$ \\
  \small  1. Department of Mathematics, Dalian Nationalities University, Dalian
   116600,  China\\
\small 2. College of Computer  Science, Dalian
  Nationalities University, Dalian  116600,  China\\
\small 3.
School of Mathematics and Information Science, Wenzhou University,
Wenzhou 325035, China
   }

\pagestyle{myheadings}
\markboth{ }%
{}

\date{}
 \maketitle

\begin{abstract}
This paper studies the dynamics of a network-based SIRS epidemic model with vaccination and a nonmonotone incidence rate. This type of nonlinear incidence can be used to describe
the psychological or inhibitory effect from the behavioral change of the susceptible individuals when the number of infective individuals on heterogeneous networks is getting larger. Using the analytical method, epidemic threshold  $R_0$ is obtained. When $R_0$ is less than one, we prove the disease-free equilibrium  is globally asymptotically stable and the disease dies out, while  $R_0$ is greater than one, there exists a unique endemic equilibrium. By constructing a suitable Lyapunov function, we also prove the endemic equilibrium is globally asymptotically stable if the inhibitory factor $\alpha$ is sufficiently large.  Numerical experiments are also given to support the theoretical results. It is shown both theoretically and numerically a larger $\alpha$ can accelerate the extinction of the disease and reduce the level of disease.

\medskip

 \noindent{\bf\small Keywords:} SIRS Epidemic model;   Complex networks;
 Global stability;  Nonmonotone incidence.

 \end{abstract}


%

\section{Introduction}
Mathematical models which describe the dynamics of infectious diseases have played a crucial role in the disease control in epidemiological aspect. In the traditional epidemiology, it is commonly assumed
 that individuals mix uniformly and all hosts have identical rates of disease-causing contacts. This over-simplified assumption makes the analysis tractable but not realistic\cite{bansal2007when}. Virtually, the interpersonal contact underlying disease transmission can be thought of expanding a complex network, where relations (edges) join individuals (nodes) who interact with each other \cite{Pastor-Satorras20013200, Mao-Xing20092115}. Therefore, the disease transmission should be modeled over complex networks. Related studies indicated that networks structures have profound impacts on the spreading dynamics
\cite{peng2013vaccination,Newman2002,Wang2012543,Zhu20122588}.

 One of the pioneer works in
this area  was done by Pastor-Satorras and Vespignani
\cite{Pastor-Satorras2002,Pastor-Satorras20013200},
where they first succeeded in studying susceptible-infectious-susceptible (SIS) epidemic model on scale-free networks by large-scale simulations. The most striking result is that they found the absence of the epidemic threshold in these networks. That is, the threshold approaches zero in the limit of a large number of edges and nodes, and even a quite small infectious rate can produce a major epidemic outbreak,  for
which the rigorous proof was given by Wang and Dai \cite{Wang20081495} by a monotone iterative technique.  Moreno et al. \cite{Moreno2002521} studied susceptible-infectious-recovered (SIR) epidemic model on scale-free complex population networks. They show that the large connectivity fluctuations usually
found in these networks strengthen considerably the incidence of epidemic outbreaks. These outstanding results have inspired a great number of related works (see Refs. \cite{Li20141042,Liu20113375, Li2014686, ZHU2017614, Huang2017296, Wei2017a} and the references therein).

It is well-known that the spread of many human diseases can be prevented or reduced by vaccination of the susceptible individuals. Chen and Sun \cite{Chen2014196} firstly succeeded in studying optimal control of an susceptible-infectious-recovered-susceptible (SIRS) epidemic model with vaccination on heterogeneous networks. They showed that if the percentage of vaccination of the susceptible is smaller than the recovered rate, the diseases may persist on heterogeneous networks.

We shall emphasize that the incidence rate plays an important role in guaranteeing that the model can give a reasonable approximative description for the disease dynamics. However, in most existing epidemic models on complex networks, the incidence rate is usually assumed to be bilinear function based on the mass action law for infection. In fact, there are several reasons for using nonlinear incidence rates, even nonmonotone incidence function\cite{Xiao2007419, Zhang201424, Yuan2012, Yang2010, Muroya2011}. In practical situations, the number of effective contacts between infective individuals and susceptible individuals usually  decreases at high infective levels due to the quarantine of infective individuals or due to the protection measures by the susceptible individuals. Very recently, for modelling such a psychological behavior on complex networks, Li \cite{Li2015234} studied  the dynamics of a network-based SIS epidemic model with nonmonotone incidence rate,
\begin{equation}\label{eq:li-sis}
\left
\{\begin{split}
&\frac{d S_k(t)}{dt}=-\lambda kS_k(t)g(\Theta)+I_k(t), \\[2mm]
&\frac{d I_k(t)}{dt}= \lambda kS_k(t)g(\Theta)-I_k(t),\quad k=1,2,\cdots,n,
\end{split}
\right.
\end{equation}
where $\lambda>0$ is the transmission rate when susceptible individuals contact with infectious. $S_k(t)$ and $I_k(t)$ denote the relative densities of susceptible and infectious with degree $k$ at time $t$ on the complex networks with maximum degree $n$. The connectivity of nodes on the network is assumed to be uncorrelated, thus, we have $\Theta(t)=\frac{1}{\langle k
\rangle}\sum_{k=1}^nkP(k)I_k(t)$ with $\langle
k\rangle=\sum_{k=1}^n kP(k)$ is the average degree of the network and $P(i)$ is the connectivity distribution.

The function $g(\Theta)$ is introduced in SIS model \eqref{eq:li-sis} to reflect the psychological behavior mentioned above, which leads to nonlinear incidence rate defined by
\begin{equation}\label{nonlinear incidence}
  \lambda kS_k(t)g(\Theta):=\lambda kS_k(t)\gthetaa,
\end{equation}
where $\alpha\geq 0$ is a parameter.

When $\alpha=0$, then the nonlinear incidence rate \eqref{nonlinear incidence} becomes the bilinear one. Hence, the SIS model with \eqref{nonlinear incidence} can be seen a generalization of the existing SIS model. Meanwhile, if $\alpha$ is large enough (e.g., $\alpha>2$), the function $g$ becomes a nonmonotone function. The biological meaning is that at high infective risk (i.e., when $\Theta$ is sufficiently large),
the incidence rate may decrease as $\Theta$ increases because individuals become more careful and tend to reduce their contacts with other ones. In this sense, we call the parameter $\alpha$ inhibitory factor from the behavioral change of the susceptible individuals.

Inspired by the works of \cite{Chen2014196} and \cite{Li2015234}, in this paper we propose the following SIRS epidemic model with vaccination and the nonmonotone incidence rate as follows,
\begin{equation}\label{eq:nm-sirs-simplify}
\left\{\begin{split}
&\frac{d S_k(t)}{dt}=-\lambda kS_k(t)\gtheta+\delta R_k(t)-\mu S_k(t), \\[2mm]
&\frac{d I_k(t)}{dt}= \lambda kS_k(t)\gtheta-\gamma I_k(t),\\
&\frac{d R_k(t)}{dt}=\gamma I_k(t)-\delta R_k(t)+\mu S_k(t),\quad k=1,2,\cdots,n.
\end{split}\right.
\end{equation}
where $S_k(t), I_k(t), R_k(t)$ denote the relative densities of susceptible nodes, infectious nodes and recovered nodes with degree $k$ respectively. $\delta>0$ represents the rate of immunization-lost for recovered nodes; $\gamma>0$ represents the recovery rate of infected nodes; $\mu>0$ represents the vaccination percentage for the susceptible nodes.

Among the existing epidemic models on complex networks, there has far been relatively little research into network epidemic models with nonlinear incidence rate.
In \cite{Li2015234}, the author successfully proposed the SIS model with nonmonotone incidence rate and obtained the epidemic threshold.  It was proved that if
the transmission rate is below the threshold, the disease-free equilibrium is globally
asymptotically stable, otherwise the endemic equilibrium is permanent. In our recent paper \cite{Wei2017}, the global stability and attractivity of the endemic equilibrium  of system \eqref{eq:li-sis} is rigourously proved. When $\alpha=0$, the proposed model \eqref{eq:nm-sirs-simplify} can be simplified to the model proposed by Chen and Sun \cite{Chen2014196}. In this paper, the global stability of disease-free equilibrium as well as the endemic equilibrium is rigourously proved without additional assumptions on the constants, which improves the existing results.


The remainder of this paper is organized as follows. In Section 2, we show that the solutions of system \eqref{eq:nm-sirs-simplify} are positive
and the epidemic threshold is obtained. In Section 3, we study the global stability of the disease-free equilibrium as well as the endemic equilibrium. In Section 4, numerical experiments are given to illustrate the theoretical results. Finally, conclusions  are drawn in Section 5.

\section{Positivity of solutions and the epidemic threshold}
From a practical perspective, the initial conditions for system \eqref{eq:nm-sirs-simplify} satisfy
\begin{equation}\label{eq:intial conditions}
\begin{split}
    I_k(0)\geq 0,\; R_k(0)\geq 0, \;
    S_k(0)=1-I_k(0)-R_k(0)>0,\; k=1,2,\ldots,n,\; \Theta(0)>0.
\end{split}
\end{equation}
Note that  $\frac{d}{dt}(S_k(t)+I_k(t)+R_k(t))=0$ and $S_k(t)+I_k(t)+R_k(t)=1$ for all $t\geq 0$ and for all
$k=1,2,\ldots,n.$ So the system \eqref{eq:nm-sirs-simplify} becomes the following form:
\begin{equation}\label{eq:nm-si-system}
\left\{\begin{split}
&\frac{d S_k(t)}{dt}=-\lambda kS_k(t)\gtheta+\delta (1-S_k(t)-I_k(t))-\mu S_k(t), \\[2mm]
&\frac{d I_k(t)}{dt}= \lambda kS_k(t)\gtheta-\gamma I_k(t),\quad k=1,2,\cdots,n.
\end{split}\right.
\end{equation}%

As system \eqref{eq:nm-si-system} is equivalent to the system \eqref{eq:nm-sirs-simplify}, from now on we only consider the dynamics of \eqref{eq:nm-si-system}.

First we establish the positivity of solutions of system \eqref{eq:nm-si-system} in the following lemma.
\begin{lemma}\label{lm:1}
  Let $(S_1,I_1,\ldots,S_n,I_n)$ be the solution of SIRS system \eqref{eq:nm-si-system} with initial conditions \eqref{eq:intial conditions}. Then for $k = 1, 2, \ldots, n$, we have $0 < S_k (t) < 1, 0 < I_k (t) < 1$, and $0 < \Theta(t) < 1$ for all $t > 0$.
\end{lemma}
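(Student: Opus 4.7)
The plan is to combine the conservation identity $S_k(t)+I_k(t)+R_k(t)\equiv 1$ (with $R_k=1-S_k-I_k\ge 0$) with a small cascade of Gr\"onwall/integrating-factor estimates. The argument splits into three stages: non-negativity, strict positivity, and strict upper bounds.

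\emph{Stage 1 (non-negativity).} I would first show that the non-negative orthant, intersected with the simplex $S_k+I_k+R_k=1$, is forward invariant. This is the standard tangent-condition check on the boundary faces: on $\{S_{k_0}=0\}$ one has $\dot S_{k_0}=\delta R_{k_0}\ge 0$; on $\{I_{k_0}=0\}$ one has $\dot I_{k_0}=\lambda k_0 S_{k_0}\,\Theta/(1+\alpha\Theta^2)\ge 0$; and on $\{R_{k_0}=0\}$ one has $\dot R_{k_0}=\gamma I_{k_0}+\mu S_{k_0}\ge 0$. Together with $\frac{d}{dt}(S_k+I_k+R_k)=0$, this pins every trajectory inside $[0,1]^{3n}$ for all $t\ge 0$.

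\emph{Stage 2 (strict positivity).} Using the non-negativity just proved, the $S_k$ equation yields $\dot S_k\ge -\bigl[\lambda k\,\Theta/(1+\alpha\Theta^2)+\mu\bigr]S_k$ (the dropped $\delta R_k$ term is non-negative), so Gr\"onwall gives
\begin{equation*}
S_k(t)\ge S_k(0)\exp\!\left(-\int_0^t\!\left[\lambda k\,\tfrac{\Theta(s)}{1+\alpha\Theta^2(s)}+\mu\right]ds\right)>0.
\end{equation*}
Next I would derive an ODE for $\Theta$ by summing the $I_k$ equations against $kP(k)/\langle k\rangle$; this yields
\begin{equation*}
\dot\Theta(t)=\lambda\,\tfrac{\Theta(t)}{1+\alpha\Theta^2(t)}\,\tfrac{1}{\langle k\rangle}\!\sum_{k=1}^n k^2 P(k)\,S_k(t)-\gamma\,\Theta(t)\ge -\gamma\,\Theta(t),
\end{equation*}
so $\Theta(t)\ge \Theta(0)e^{-\gamma t}>0$ by the hypothesis $\Theta(0)>0$. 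Finally, the integrating-factor representation
\begin{equation*}
I_k(t)=e^{-\gamma t}I_k(0)+\int_0^t e^{-\gamma(t-s)}\lambda k\, S_k(s)\,\tfrac{\Theta(s)}{1+\alpha\Theta^2(s)}\,ds
\end{equation*}
is then strictly positive for every $t>0$, since the integrand is strictly positive on $(0,t]$.

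\emph{Stage 3 (strict upper bounds).} These drop out of the conservation law: for $t>0$, $S_k(t)=1-I_k(t)-R_k(t)\le 1-I_k(t)<1$ by Stage 2, and symmetrically $I_k(t)=1-S_k(t)-R_k(t)\le 1-S_k(t)<1$. For $\Theta$, the bound $I_k(t)<1$ for every $k$ with $P(k)>0$ immediately gives $\Theta(t)=\langle k\rangle^{-1}\sum_k kP(k)I_k(t)<\langle k\rangle^{-1}\sum_k kP(k)=1$.

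The only subtlety I anticipate is the order of the arguments in Stage 2: the positivity of $I_k$ relies on positivity of $\Theta$, while $\Theta$ is defined as a weighted sum of the $I_k$. The scalar ODE for $\Theta$ sidesteps this circularity, provided one first establishes non-negativity of all the $I_k$ (Stage 1). Once $\Theta\ge\Theta(0)e^{-\gamma t}>0$ is in hand, everything else is mechanical.
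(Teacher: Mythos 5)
Your proposal is correct, but it follows a genuinely different route from the paper's Appendix~A. The paper argues by contradiction: it introduces the first time $\alpha_m$ at which some $S_m$ could vanish, proves $\Theta(t)>0$ for all $t$ from the exact factorization $\Theta'(t)=X(t)\Theta(t)$ in \eqref{e030} (so no sign information on $S_k$ is needed there), then shows $I_m\ge 0$ and $S_m+I_m<1$ on $(0,\alpha_m]$ --- the latter via a case split $\mu\le\gamma$ versus $\mu>\gamma$ using \eqref{e025} and \eqref{e026} --- and finally derives a contradiction because $\frac{d}{dt}\bigl(e^{\int(\mu+\lambda m g)}S_m\bigr)>0$ at $t=\alpha_m$ would force $S_m<0$ just before $\alpha_m$. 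You instead establish forward invariance of the simplex $\{S_k,I_k,R_k\ge0,\ S_k+I_k+R_k=1\}$ up front via the quasi-positivity (tangent-cone) check on the full system \eqref{eq:nm-sirs-simplify}, and then bootstrap to strict inequalities with integrating factors and the differential inequality $\Theta'\ge-\gamma\Theta$. Your version is more modular and avoids the $\mu\lessgtr\gamma$ case analysis entirely; it also makes the resolution of the $I_k$--$\Theta$ circularity explicit. The paper's version, by contrast, is self-contained: your Stage~1 quietly invokes a theorem (Nagumo's subtangent condition, or the standard invariance result for essentially nonnegative vector fields), and the naive reading ``the coordinate's derivative is $\ge0$ whenever it hits zero, hence it cannot go negative'' is not by itself a complete argument --- one needs either the $\epsilon$-perturbation trick ($\dot x=f(x)+\epsilon\mathbf{1}$, then $\epsilon\to0$) or a first-exit-time contradiction of exactly the kind the paper runs by hand. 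So you should either cite that invariance theorem explicitly or fold in a short perturbation argument; with that reference supplied, every subsequent step of your proof is sound, and Stage~3 correctly upgrades the weak bounds to the strict ones claimed in the lemma.
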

The proof is shown in Appendix A.

Now we are going to compute all biologically feasible equilibria $(S_k, I_k)$ admitted by system \eqref{eq:nm-si-system}.

First, it is easy to see the disease-free equilibrium of \eqref{eq:nm-si-system} is $E^0=
(\frac{\delta}{\delta+\mu},0,\ldots,\frac{\delta}{\delta+\mu},0)$.

 In order to calculate
the epidemic equilibrium $E^*(S_1^*,I_1^*,\ldots,S_n^*,I_n^*)$. Let $S_k^\prime(t)=0$ and $I_k^\prime(t)=0$. It  follows from \eqref{eq:nm-si-system} that

\newcommand{\DF}{\gamma\delta+\alpha\delta\gamma\Theta^2}
\newcommand{\DS}{\lambda\delta k\Theta}
\newcommand{\DD}{\gamma(\delta+\mu)+\lambda(\delta+\gamma)k\Theta+\alpha\gamma(\delta+\mu)\Theta^2}

\newcommand{\DFS}{\gamma\delta+\alpha\delta\gamma(\Theta^*)^2}
\newcommand{\DSS}{\lambda\delta k\Theta^*}
\newcommand{\DDS}{\gamma(\delta+\mu)+\lambda(\delta+\gamma)k\Theta^*+\alpha\gamma(\delta+\mu)(\Theta^*)^2}

\begin{equation}\label{eq:epi-equilirium}
  \left\{\begin{split}
      &S_k =\frac{\DF}{\DD}, \\
      &I_k=\frac{\DS}{\DD},\quad k=1,2,\ldots,n.
  \end{split}\right.
\end{equation}
where $\Theta=\frac{1}{\langle k
\rangle}\sum\limits_{i=1}^niP(i)I_i$. From \eqref{eq:epi-equilirium}, we obtain the self-consistency equation of the form
$\Theta f(\Theta)=\Theta$, where
\begin{equation}\label{eq:self-cons-theta}
  f(\Theta) \equiv \frac{1}{\langle k \rangle}\sum_{k=1}^{n}\frac{\lambda\delta k^2P(k)}{\DD}.
\end{equation}

\noindent Since $f^\prime(\Theta)<0$ and $f(1)<1$, the equation $\Theta f(\Theta)=\Theta$ has a unique non-trivial solution
 if and only if $f(0)>1$, that is
 \begin{equation}\label{eq1}
   \frac{\lambda\delta \langle k^2 \rangle}{\gamma(\delta+\mu)\langle k \rangle}>1,
 \end{equation}
 where $\langle k^2 \rangle=\sum_{k=1}^{n}k^2P(k)$. These analyses lead to the following result.

 \begin{lemma}\label{lm:2}
   Define the epidemic threshold
   \begin{equation*}
     R_0:=\frac{\lambda\delta \langle k^2 \rangle}{\gamma(\delta+\mu)\langle k \rangle}.
   \end{equation*}
   If $R_0>1$, then system \eqref{eq:nm-si-system} admits a unique positive equilibrium $E^*(S_1^*,I_1^*,\ldots,S_n^*,I_n^*)$, which corresponds to the endemic equilibrium of system \eqref{eq:nm-sirs-simplify} and satisfies
\begin{equation}\label{eq:epi-equilirium2}
  \left\{\begin{split}
      &S_k^* =\frac{\DFS}{\DDS}, \\
      &I_k^*=\frac{\DSS}{\DDS},\quad k=1,2,\ldots,n.
  \end{split}\right.
\end{equation}
where $\Theta^*=\langle k\rangle^{-1}\sum_{k=1}^n kP(k)I_k^* \in (0, 1)$ is the unique positive root of the equation
$$\Theta=\frac{1}{\langle k \rangle}\sum_{k=1}^{n}\frac{\lambda\delta k^2P(k)\Theta}{\DD}.$$

 \end{lemma}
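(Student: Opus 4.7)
The plan is to first derive the explicit form of the equilibrium coordinates in terms of $\Theta$ by algebraically solving the two steady-state equations, then reduce the problem to the one-dimensional self-consistency equation and apply an intermediate-value argument.

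Setting $S_k' = 0$ and $I_k' = 0$ in \eqref{eq:nm-si-system} and adding the two equations eliminates the nonlinear incidence term, yielding the linear relation
\begin{equation*}
\delta - (\delta+\mu)S_k - (\delta+\gamma)I_k = 0,
\end{equation*}
so that $S_k = \bigl(\delta - (\delta+\gamma)I_k\bigr)/(\delta+\mu)$. Substituting this into $I_k' = 0$ (which gives $\gamma I_k = \lambda k S_k\,\gthetaa$) and solving the resulting linear equation for $I_k$ produces exactly the formula for $I_k^*$ in \eqref{eq:epi-equilirium2}; plugging this back into the expression for $S_k$ yields $S_k^*$ after cancellation. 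This step is routine algebra, and it is immediate from the formulas that $S_k^*, I_k^* > 0$ as soon as $\Theta^* > 0$.

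The heart of the proof is the existence and uniqueness of $\Theta^* \in (0,1)$. Inserting the formula for $I_k^*$ into the definition $\Theta = \langle k\rangle^{-1}\sum_k kP(k)I_k$ and factoring out $\Theta$, any equilibrium with $\Theta \neq 0$ must satisfy $f(\Theta) = 1$, where $f$ is defined in \eqref{eq:self-cons-theta}. I would analyze $f$ on $[0,1]$ directly: differentiating term by term, each summand has a denominator strictly increasing in $\Theta$ (since both $\lambda k(\delta+\gamma)\Theta$ and $\alpha\gamma(\delta+\mu)\Theta^2$ are increasing for $\Theta > 0$), so $f'(\Theta) < 0$ on $(0,\infty)$. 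Evaluating at the endpoints, $f(0) = \lambda\delta\langle k^2\rangle/(\gamma(\delta+\mu)\langle k\rangle) = R_0 > 1$ by hypothesis, while discarding the nonnegative terms $\gamma(\delta+\mu)$ and $\alpha\gamma(\delta+\mu)$ in each denominator gives the crude bound
\begin{equation*}
f(1) < \frac{1}{\langle k\rangle}\sum_{k=1}^{n}\frac{\lambda\delta k^2 P(k)}{\lambda k(\delta+\gamma)} = \frac{\delta}{\delta+\gamma} < 1.
\end{equation*}
Strict monotonicity together with continuity and the inequality $f(0) > 1 > f(1)$ forces a unique root $\Theta^* \in (0,1)$ by the intermediate value theorem.

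Finally, with this $\Theta^*$ in hand, the formulas for $S_k^*, I_k^*$ in \eqref{eq:epi-equilirium2} are determined uniquely, and positivity follows from $\Theta^* \in (0,1)$ and the fact that every coefficient in the expressions is positive; the identity $S_k^* + I_k^* < 1$ (which pairs with $R_k^* > 0$ in the original \eqref{eq:nm-sirs-simplify}) can be checked directly from $\delta = (\delta+\mu)S_k^* + (\delta+\gamma)I_k^*$. I expect the main subtlety to be the upper bound $f(1) < 1$: one has to be careful to use a denominator estimate that is uniform in $k$ and retains enough of the $\gamma$-term so that the ratio collapses to a dimensionless quantity strictly less than one. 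Everything else is either algebraic manipulation or a standard monotonicity-plus-IVT argument.
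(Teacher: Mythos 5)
Your proposal is correct and follows essentially the same route as the paper: solve the steady-state equations for the explicit forms of $S_k^*, I_k^*$, reduce to the self-consistency equation $f(\Theta)=1$, and conclude by strict monotonicity of $f$ together with $f(0)=R_0>1$ and $f(1)<1$. The only difference is that you supply the details the paper leaves implicit, in particular the clean bound $f(1)<\delta/(\delta+\gamma)<1$ and the verification that $S_k^*+I_k^*<1$, both of which check out.
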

\begin{remark}
  From Lemma \ref{lm:2}, we can see that the epidemic threshold is determined in terms of the network structure and this
threshold is same as that derived in Ref. \cite{Chen20114391}. In other words, the nonlinear incidence rate does not affect the
threshold $R_0$. Besides, as the result obtained in Ref. \cite{Pastor-Satorras20013200}, the spreading processes of our model do not possess an epidemic
threshold in an infinite scale-free network since $\langle k^2 \rangle \to \infty$ in this situation.
\end{remark}
\section{Global stability analysis}
In this section, we investigate the globally asymptotical stability of the disease-free equilibrium $E^0$ as well as the endemic equilibrium $E^*$.
\begin{theorem}
  Let $R_0:=\frac{\lambda\delta \langle k^2 \rangle}{\gamma(\delta+\mu)\langle k \rangle} < 1$. Then the disease-free equilibrium $E^0=
(\frac{\delta}{\delta+\mu},0,\ldots,\frac{\delta}{\delta+\mu},0)$ of system \eqref{eq:nm-si-system} is globally asymptotically
stable, i.e., the disease fades out.
\end{theorem}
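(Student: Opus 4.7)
The plan is to combine an \emph{a priori} estimate on the susceptible densities with a scalar differential inequality for $\Theta(t)$ to establish global attractivity of $E^0$, and to supplement this with a short linearization argument for local stability. Throughout I will use the elementary bound $\gthetaa \le \Theta$, which follows from $\alpha \ge 0$.

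First I would derive the asymptotic estimate $\limsup_{t\to\infty} S_k(t) \le \delta/(\delta+\mu)$. Dropping the non-positive infection and $-\delta I_k$ terms in the $S_k$-equation of \eqref{eq:nm-si-system} yields $\dot S_k \le \delta - (\delta+\mu) S_k$, and comparison with $\dot u = \delta - (\delta+\mu) u$ shows that for any $\varepsilon > 0$ there exists $T(\varepsilon)$ such that $S_k(t) \le \delta/(\delta+\mu) + \varepsilon$ for every $k$ and every $t \ge T(\varepsilon)$. Next, differentiating $\Theta(t) = \langle k \rangle^{-1}\sum_k k P(k) I_k(t)$ along trajectories and substituting the $I_k$-equations,
\begin{equation*}
\frac{d\Theta}{dt} \;=\; \frac{\lambda}{\langle k \rangle}\,\gthetaa \sum_{k=1}^{n} k^2 P(k) S_k(t) \;-\; \gamma\,\Theta(t).
\end{equation*}
For $t \ge T(\varepsilon)$, the $S_k$ bound together with $\gthetaa \le \Theta$ and the algebraic identity $\lambda \langle k^2 \rangle / \langle k \rangle = \gamma R_0 (\delta+\mu)/\delta$ gives
\begin{equation*}
\frac{d\Theta}{dt} \;\le\; \gamma\!\left[R_0\!\left(1 + \varepsilon\,\frac{\delta+\mu}{\delta}\right) - 1\right]\Theta(t).
\end{equation*}
Since $R_0 < 1$, choosing $\varepsilon$ small enough makes the bracket strictly negative, so $\Theta(t) \to 0$ exponentially.

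Once $\Theta(t) \to 0$, positivity together with $k P(k) I_k(t) \le \langle k \rangle \Theta(t)$ forces $I_k(t) \to 0$ for every $k$ with $P(k) > 0$; substituting this and $\Theta(t) \to 0$ into the $S_k$-equation, rewritten as $\dot S_k + (\delta+\mu) S_k = \delta - \delta I_k - \lambda k S_k \gthetaa$, and applying variation of parameters yields $S_k(t) \to \delta/(\delta+\mu)$, giving global attractivity of $E^0$. Local asymptotic stability is a brief linearization check: the Jacobian at $E^0$ is block upper triangular with $-(\delta+\mu) I$ on the $S$-block, while the $I$-block equals $-\gamma I + R$ where $R_{kj} = \lambda \delta k j P(j)/[(\delta+\mu)\langle k \rangle]$ is rank one with unique nonzero eigenvalue $\gamma R_0$, so the full spectrum lies in the open left half plane whenever $R_0 < 1$. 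The main obstacle is extracting the sharp threshold $R_0 < 1$ from the scalar inequality for $\Theta$: a naive use of $S_k < 1$ only gives the weaker condition $\lambda \langle k^2 \rangle / (\gamma \langle k \rangle) < 1$, which ignores the vaccination/immunity-loss balance at the disease-free equilibrium. The refined bound $S_k \le \delta/(\delta+\mu) + \varepsilon$, which bakes this balance into the estimate, is precisely what is needed to close the argument.
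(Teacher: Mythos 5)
Your proof is correct and follows essentially the same route as the paper: the paper's Lyapunov function $V=\sum_k \frac{kP(k)}{\gamma\langle k\rangle}I_k$ is exactly $\Theta(t)/\gamma$, so its computation of $\dot V$ along the comparison system is literally your scalar differential inequality for $\Theta$ after the same a priori bound $S_k\le \delta/(\delta+\mu)+\varepsilon$. If anything you are more complete than the paper, since you explicitly deduce $I_k\to 0$ and $S_k\to\delta/(\delta+\mu)$ and add the linearization for local stability, whereas the paper leaves these to an appeal to the comparison theorem (and, like you, tacitly needs $P(k)>0$ or the extra bound $\dot I_k\le \lambda k\Theta-\gamma I_k$ to handle degrees with $P(k)=0$).
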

\begin{proof}
We first claim that
\begin{equation}\label{eq3}
  \limsup_{t\to +\infty} S_k(t)=\frac{\delta}{\delta+\mu}:=S_k^0.
\end{equation}
Since by Lemma \ref{lm:1}, we have $0<S_k(t), I_k(t) <1$,  and $0<\Theta(t)<1$ for all
$t>0$.
From the first equation of system \eqref{eq:nm-si-system}, it follows that
  \begin{equation}\label{eq4}
    \frac{dS_k(t)}{dt}\leq \delta-(\delta+\mu)S_k(t).
  \end{equation}
which implies \eqref{eq3}. So for any $\epsilon>0,$ there holds $S_k(t)\leq \frac{\delta}{\delta+\mu}+\epsilon=S_k^0+\epsilon$ when $t$ is sufficiently large. Thus from the second equation of system \eqref{eq:nm-si-system}, we have
\begin{equation}\label{eq5}
  \frac{dI_k(t)}{dt} < \lambda k (S_k^0+\epsilon)\Theta(t)-\gamma I_k(t).
\end{equation}
It then suffices to show the positive solutions of the system
\begin{equation}\label{eq6}
  \frac{dI_k(t)}{dt} = \lambda k (S_k^0+\epsilon)\Theta(t)-\gamma I_k(t)
\end{equation}
tend to zero at $t$ goes to infinity. Consider the Lyapunov function
\begin{equation}\label{eq7}
  V(t)=\sum_{k=1}^{n}w_k I_k(t),
\end{equation}
where
$$w_k=\frac{kP(k)}{\gamma\langle k\rangle},\quad k=1,2,\ldots,n.$$
Calclulating the derivative of $V(t)$ along the positive solutions of \eqref{eq6}, we have
\begin{equation}\label{eq8}
  \begin{split}
     \frac{dV(t)}{dt} &=\sum_{k=1}^{n} w_k [\lambda k(S_k^0+\epsilon)\Theta(t)-\gamma I_k(t)], \\
       & =\sum_{k=1}^{n} \frac{kP(k)}{\gamma \langle k \rangle}[\lambda k(S_k^0+\epsilon)\Theta(t)-\gamma I_k(t)],\\
       & =\Theta(t)\left(R_0+\frac{\lambda \langle k^2 \rangle \epsilon}{\gamma \langle k \rangle}-1\right).
  \end{split}
\end{equation}
Since $R_0 <1$, we can fix an $\epsilon>0$ small enough such that $R_0+\frac{\lambda \langle k^2 \rangle \epsilon}{\gamma \langle k \rangle} < 1$. That ensures $\frac{dV}{dt} \leq 0$ and $\frac{dV}{dt}=0$ holds only if $I_k=0$ for $k=1,\cdots,n$. Hence the trivial solution of \eqref{eq6} is globally asymptotically stable. Further from the nonnegativeness of the positive solution of \eqref{eq:nm-si-system} and the the comparison theorem, we complete the proof.
\end{proof}

Now our task is to claim the globally asymptotical stability of the endemic equilibrium.

\begin{theorem}\label{th2}Let $R_0:=\frac{\lambda\delta \langle k^2 \rangle}{\gamma(\delta+\mu)\langle k \rangle} > 1$.
If $\alpha\geq \alpha_c:=\frac{\lambda^2 n^2(R^0)^2}{16\omega^2}$ or $\alpha\leq \alpha_l:=\frac{2\omega}{\lambda n}$ with constant $\omega>0$ defined in equation \eqref{eq:B5}, then the
endemic equilibrium $E^*(S_1^*,I_1^*,\ldots,S_n^*,I_n^*)$ of
system \eqref{eq:nm-si-system} is globally asymptotically stable.
\end{theorem}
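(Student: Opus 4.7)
The plan is to construct a Volterra-type Lyapunov function
\begin{equation*}
V(t) = \sum_{k=1}^n w_k\!\left[\phi\!\left(\frac{S_k}{S_k^*}\right)S_k^* + \phi\!\left(\frac{I_k}{I_k^*}\right)I_k^*\right],\qquad \phi(x):=x-1-\ln x\ge 0,
\end{equation*}
with positive weights $w_k$ (most likely of the form $\lambda k P(k)/(\gamma\langle k\rangle)$), chosen so that the bilinear infection cross-terms telescope, after summation over $k$, into a single quantity in $\Theta$ and $\Theta^*$. First I rewrite $\dot S_k$ and $\dot I_k$ in equilibrium-adjusted form using the identities $\gamma I_k^*=\lambda k S_k^* g(\Theta^*)$ and $\delta(1-S_k^*-I_k^*)=\gamma I_k^*+\mu S_k^*$, where $g(\Theta)=\Theta/(1+\alpha\Theta^2)$. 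Expanding $\dot V_k=(1-S_k^*/S_k)\dot S_k+(1-I_k^*/I_k)\dot I_k$ and invoking $\phi\ge 0$, I expect $\dot V$ to split into (a) manifestly nonpositive log-terms from AM--GM, (b) a negative quadratic $-\omega\sum_k (S_k-S_k^*)^2$ whose coefficient $\omega$ arises from absorbing the mixed term $\delta(I_k-I_k^*)(1-S_k^*/S_k)$ through a Young/Cauchy--Schwarz inequality (this is where equation \eqref{eq:B5} originates), and (c) a residual incidence contribution $\mathcal R(\Theta,\Theta^*)$ measuring the discrepancy between $g(\Theta)/g(\Theta^*)$ and $\Theta/\Theta^*$.

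The crucial algebraic identity is
\begin{equation*}
g(\Theta)-g(\Theta^*)=\frac{(\Theta-\Theta^*)\bigl(1-\alpha\Theta\Theta^*\bigr)}{(1+\alpha\Theta^2)\bigl(1+\alpha(\Theta^*)^2\bigr)},
\end{equation*}
which isolates the sign-indefinite factor $1-\alpha\Theta\Theta^*$ as the sole obstruction to monotonicity. In the small-$\alpha$ window $\alpha\le\alpha_l=2\omega/(\lambda n)$, this factor is uniformly positive on the feasible region, so $\mathcal R$ has the favourable sign and can be bounded above by a constant multiple of $(\Theta-\Theta^*)^2$; combined with $\Theta-\Theta^*=\langle k\rangle^{-1}\sum_k kP(k)(I_k-I_k^*)$ and Cauchy--Schwarz, this residual is then absorbed into the negative quadratic (b). In the large-$\alpha$ window $\alpha\ge\alpha_c=\lambda^2 n^2(R^0)^2/(16\omega^2)$, the factor may turn negative, but the a priori bound $g(\Theta)\le 1/(2\sqrt{\alpha})$ forces $|\mathcal R|=O(1/\sqrt{\alpha})$, so (b) again dominates once $\alpha$ exceeds $\alpha_c$. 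The precise numerical values of $\alpha_l$ and $\alpha_c$ emerge from optimising the tunable parameter in the Young inequality that dominates $\mathcal R$ by (b).

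Once $\dot V\le 0$ throughout the feasible region and $\{\dot V=0\}=\{E^*\}$ are verified, an application of LaSalle's invariance principle together with the positive invariance provided by Lemma~\ref{lm:1} yields global asymptotic stability of $E^*$. The main technical obstacle is step (c): the algebraic bookkeeping needed to rewrite $\mathcal R$ as $(\Theta-\Theta^*)^2$ times a rational prefactor whose sign (in the small-$\alpha$ case) or magnitude (in the large-$\alpha$ case) can be controlled. The dichotomous hypothesis on $\alpha$ is not an artefact of the method but a genuine consequence of the sign change of $1-\alpha\Theta\Theta^*$ over $(0,1)\times(0,1)$, which is why no single-interval condition on $\alpha$ appears to suffice.
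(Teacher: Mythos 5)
Your proposal has a genuine gap at the step you describe as ``absorbing the mixed term $\delta(I_k-I_k^*)(1-S_k^*/S_k)$ through a Young/Cauchy--Schwarz inequality.'' Because the model is SIRS rather than SIR, the equation for $S_k$ contains $\delta(1-S_k-I_k)$, and after subtracting the equilibrium identity the derivative of $\phi(S_k/S_k^*)S_k^*$ produces the sign-indefinite cross term $-\delta\,(S_k-S_k^*)(I_k-I_k^*)/S_k$. Young's inequality turns this into a \emph{positive} multiple of $(I_k-I_k^*)^2$, and your candidate Lyapunov function provides nothing to cancel it: the Volterra blocks $\phi(I_k/I_k^*)I_k^*$ yield, after the standard manipulation with $\gamma I_k^*=\lambda k S_k^* g(\Theta^*)$, only AM--GM-type expressions that vanish on the whole set where $S_k g(\Theta)/(S_k^* g(\Theta^*))=I_k/I_k^*$, so they are nonpositive but not negative definite in $I_k-I_k^*$ and cannot dominate a quadratic remainder. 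This is precisely the known obstruction to running the pure Volterra argument for SIRS-type systems, and it is why $\omega$ cannot ``originate'' from such an absorption. The paper circumvents it by using a hybrid Lyapunov function: explicit quadratic terms $\tfrac12 a_k(S_k-S_k^*)^2+\tfrac{m}{2}a_k(R_k-R_k^*)^2$ (with $a_k=kP(k)/(\langle k\rangle S_k^*)$) plus a single Volterra term in the aggregate variable $\Theta$ only. The cross terms in $(S_k-S_k^*)$ and $(R_k-R_k^*)$ are then collected into a quadratic form $F_k(m)$, and Appendix~B chooses $m=m^*$ so that its discriminant is negative; $\omega$ and $\nu$ are the resulting positive-definiteness constants. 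No per-$k$ logarithmic terms in $S_k$ or $I_k$ appear at all.

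A secondary inaccuracy: the dichotomy in $\alpha$ does not come from the sign of $1-\alpha\Theta\Theta^*$ in your identity for $g(\Theta)-g(\Theta^*)$ (which is correct as an identity, but is not the quantity the paper controls). In the paper's computation the dangerous residual is a quadratic form in $X_k=S_k-S_k^*$ and $Y=\Theta-\Theta^*$ with middle coefficient proportional to $\alpha\lambda k S_k^*\Theta^*(\Theta+\Theta^*)/\big[\omega(1+\alpha(\Theta^*)^2)(1+\alpha\Theta^2)\big]$, and both thresholds $\alpha_l$ and $\alpha_c$ arise from forcing its discriminant to be nonpositive: for $\alpha\le\alpha_l$ by the crude bound $\Theta,\Theta^*\le 1$, and for $\alpha\ge\alpha_c$ by the a priori estimate $\Theta^*\le R_0/(2\sqrt{\alpha})$ together with $(1+\alpha\Theta^2)(1+\alpha(\Theta^*)^2)\ge 1+\tfrac{\alpha}{2}(\Theta+\Theta^*)^2$. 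Your large-$\alpha$ heuristic ($g(\Theta)\le 1/(2\sqrt{\alpha})$ forcing the residual to be $O(1/\sqrt{\alpha})$) is in the right spirit and is essentially the same estimate the paper uses, but it only becomes a proof once the residual is organized as a discriminant condition on a genuine quadratic form --- which requires the quadratic (not Volterra) structure of the Lyapunov function in the first place.
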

\begin{proof}
By Lemma \ref{lm:1}, we have $0<S_k(t)<1, 0<\Theta(t)<1$ for all
$t>0$. The first equation of \eqref{eq:nm-si-system} can be reformulated as
\begin{equation}\label{e2}
  \begin{split}
 &S_k'(t)=\delta-(\delta+\mu)S_k(t)-\delta I_k(t)-\lambda k S_k(t)\frac{\Theta(t)}{1+\alpha\Theta^2(t)}.
 \end{split}
 \end{equation}
Moreover, one can derive from the second equation of  \eqref{eq:nm-si-system} that
\begin{equation}\label{eu4}
\begin{split}
 \Theta'(t)&=\frac{1}{\mk[k]}\sum_{k=1}^{n}kP(k)I_k^\prime(t)\\
           &=\frac{1}{\mk[k]}\sum_{k=1}^{n}kP(k)\left[\lambda k S_k(t)\gtheta-\gamma I_k(t)\right]\\
          &=-\gamma \Theta(t)+\frac{\lambda}{\mk}\sum_{k=1}^{n}k^2P(k)S_k(t)\gtheta.
\end{split}\end{equation}
Consider the following  Lyapunov function
\begin{equation*}\label{eu5}
\begin{split}
V(t)=\frac12\sum\limits_{k=1}^{n}\left\{a_k\big(S_k(t)-S_k^*\big)^2+m a_k(R_k(t)-R_k^*)^2\right\}+\Theta(t)-\Theta^*-\Theta^*\ln\frac{\Theta(t)}{\Theta^*},
\end{split}\end{equation*}
 where
\begin{equation*}\label{eu5}
\begin{split}
 R_k^*=1-S_k^*-I_k^*,\quad a_k=\frac{kP(k)}{\langle k \rangle S_k^*}, \quad k=1,2,\cdots,n,
\end{split}\end{equation*}
and the positive constant $m$ will be determined later.

Calculating the derivative of $V(t)$ along the positive solution of \eqref{eq:nm-si-system}, we
have
\begin{equation}\label{eu66}\begin{split}
V'=\sum\limits_{k=1}^{n}a_k(S_k-S_k^*) S_k'+\sum\limits_{k=1}^{n}m a_k(R_k-R_k^*) R_k'+\frac{\Theta-\Theta^*}{\Theta}\Theta'=:V_1+V_2+V_3.\\[2mm]
\end{split}\end{equation}

Using \eqref{eu4} and the identity  $1=\frac{\lambda}{\gamma \langle k \rangle}\sum\limits_{k=1}^{n}
\frac{k^2P(k)S_k^*}{1+\alpha(\Theta^*)^2}$, we obtain
\begin{equation}\label{eu660}\begin{split}
 V_3
=&\gamma(\Theta-\Theta^*)\left(- 1+\frac{\lambda}{\gamma\langle k \rangle}\sum\limits_{k=1}^{n}\frac{k^2P(k)S_k}{1+\alpha\Theta^2}\right)\\[2mm]
=&(\Theta-\Theta^*)\frac{\lambda}{\langle k \rangle}\sum\limits_{k=1}^{n}k^2P(k)\left(\frac{S_k}{1+\alpha\Theta^2}-\frac{S_k^*}{1+\alpha(\Theta^*)^2}\right)\\[2mm]
=&\frac{\lambda}{\langle k \rangle}\sum\limits_{k=1}^{n}k^2P(k)\left[\frac{(S_k-S_k^*)(\Theta-\Theta^*)}{1+\alpha(\Theta^*)^2} - \frac{\alpha S_k(\Theta-\Theta^*)(\Theta^2-(\Theta^*)^2)}{(1+\alpha(\Theta^*)^2)(1+\alpha\Theta^2)}\right].\\[2mm]
\end{split}\end{equation}

Using the last equation of \eqref{eq:nm-sirs-simplify} and the identity $\gamma I_k^*-\delta R_k^*+\mu S_k^*=0$, we have
\begin{equation}\label{eqv2}
  \begin{split}
     V_2 &=\sum_{k=1}^{n}  ma_k(R_k-R_k^*)\left(\gamma I_k-\delta R_k+\mu S_k\right) \\
       & =\sum_{k=1}^{n} m a_k(R_k-R_k^*)\big[\gamma (I_k-I_k^*)-\delta (R_k-R_k^*)+\mu (S_k-S_k^*)\big] \\
       & =\sum_{k=1}^{n} m a_k(R_k-R_k^*)\big[-\gamma (R_k-R_k^*)-\gamma(S_k-S_k^*)-\delta (R_k-R_k^*)+\mu (S_k-S_k^*)\big] \\
       & =-\sum_{k=1}^{n} (\gamma+\delta) m a_k(R_k-R_k^*)^2-\sum_{k=1}^{n} (\gamma-\mu)m a_k(R_k-R_k^*)(S_k-S_k^*).
  \end{split}
\end{equation}

Using \eqref{e2} and the identity $\delta=(\delta+\mu)S_k^*+\delta I_k^*+\lambda k S_k^*\gthetaaa$, we have
\begin{equation}\label{eu555}
\begin{split}
 V_1=&\sum\limits_{k=1}^{n}a_k(S_k-S_k^*)\left(\delta-(\delta+\mu)S_k-\delta I_k-\lambda k S_k\gthetaa\right)\\[2mm]
=&\sum\limits_{k=1}^{n}a_k(S_k-S_k^*)\left\{-(\delta+\mu)(S_k-S_k^*)-\delta(I_k-I_k^*)+\lambda k \left[\frac{S_k^*\Theta^*}{1+\alpha(\Theta^*)^2}-\frac{S_k\Theta}{1+\alpha\Theta^2}\right]\right\}\\[2mm]
=&\sum\limits_{k=1}^{n}a_k(S_k-S_k^*)\left\{-\mu(S_k-S_k^*)+\delta(R_k-R_k^*)+\lambda k \left[\frac{S_k^*\Theta^*}{1+\alpha(\Theta^*)^2}-\frac{S_k\Theta}{1+\alpha\Theta^2}\right]\right\}\\[2mm]
=&-\sum\limits_{k=1}^{n}\mu a_k(S_k-S_k^*)^2+\sum_{k=1}^{n}\delta a_k (S_k-S_k^*)(R_k-R_k^*) \\[2mm]
&-\frac{\Theta}{1+\alpha\Theta^2}\sum\limits_{k=1}^{n}\lambda k a_k(S_k-S_k^*)^2-\sum\limits_{k=1}^{n}\frac{\lambda k a_k S_k^*}{1+\alpha(\Theta^*)^2}(S_k-S_k^*)(\Theta-\Theta^*)\\[2mm]
&+\sum\limits_{k=1}^{n}\alpha\lambda k a_k S_k^*(S_k-S_k^*)\frac{\Theta(\Theta^2-(\Theta^*)^2)}{(1+\alpha(\Theta^*)^2)(1+\alpha\Theta^2)}.\\[2mm]
\end{split}\end{equation}
Note that the final term of right hand side of the above equality can be written as
\begin{equation*}
\begin{split}
 &\sum\limits_{k=1}^{n}\alpha\lambda k a_k S_k^*(S_k-S_k^*)\frac{\Theta(\Theta^2-(\Theta^*)^2)}{(1+\alpha(\Theta^*)^2)(1+\alpha\Theta^2)}\\[2mm]
 &=\sum\limits_{k=1}^{n}\alpha\lambda k a_k S_k^*(S_k-S_k^*)\left[\frac{(\Theta-\Theta^*)(\Theta^2-(\Theta^*)^2)}{(1+\alpha(\Theta^*)^2)(1+\alpha\Theta^2)}+\frac{\Theta^*(\Theta^2-(\Theta^*)^2)}{(1+\alpha(\Theta^*)^2)(1+\alpha\Theta^2)}\right]\\[2mm]
&=\sum\limits_{k=1}^{n}\alpha  \lambda k a_k S_k^*S_k\frac{ (\Theta-\Theta^*)(\Theta^2-(\Theta^*)^2)}{(1+\alpha(\Theta^*)^2)(1+\alpha\Theta^2)}-\sum\limits_{k=1}^{n}\frac{\alpha\lambda k a_k (S_k^*)^2  (\Theta+\Theta^*) }{(1+\alpha(\Theta^*)^2)(1+\alpha\Theta^2)}(\Theta-\Theta^*)^2\\[2mm]
&\quad+\sum\limits_{k=1}^{n}\frac{\alpha\lambda k a_k S_k^* \Theta^*(\Theta+\Theta^*) }{(1+\alpha(\Theta^*)^2)(1+\alpha\Theta^2)}(S_k-S_k^*)(\Theta-\Theta^*).\\[2mm]
\end{split}\end{equation*}

Substituting it into \eqref{eu555} yields
\begin{equation}\label{eu55}
\begin{split}
 V_1=&-\sum\limits_{k=1}^{n}\mu a_k(S_k-S_k^*)^2+\sum_{k=1}^{n}\delta a_k (S_k-S_k^*)(R_k-R_k^*)\\[2mm]
 &-\frac{\Theta}{1+\alpha\Theta^2}\sum\limits_{k=1}^{n}\lambda k a_k(S_k-S_k^*)^2\\[2mm]
&+\sum\limits_{k=1}^{n}\frac{\alpha\lambda k a_k S_k^* \Theta^*(\Theta+\Theta^*) }{(1+\alpha(\Theta^*)^2)(1+\alpha\Theta^2)}(S_k-S_k^*)(\Theta-\Theta^*)\\[2mm]
&-\sum\limits_{k=1}^{n}\frac{\alpha\lambda k a_k (S_k^*)^2  (\Theta+\Theta^*) }{(1+\alpha(\Theta^*)^2)(1+\alpha\Theta^2)}(\Theta-\Theta^*)^2\\[2mm]
&-\sum\limits_{k=1}^{n}\lambda k a_k S_k^*\left[\frac{(S_k-S_k^*)(\Theta-\Theta^*)}{1+\alpha(\Theta^*)^2} -\frac{\alpha S_k (\Theta-\Theta^*)(\Theta^2-(\Theta^*)^2)}{(1+\alpha(\Theta^*)^2)(1+\alpha\Theta^2)}\right].\\[2mm]
\end{split}\end{equation}
Since $a_k=\frac{kP(k)}{\langle k \rangle S_k^*}$, we find the final term of \eqref{eu55} is equal to $-V_3$. Then combining \eqref{eqv2}, it follows that
\begin{equation}\label{eu9}
\begin{split}
V'=&-\sum\limits_{k=1}^{n}a_kF_k(m)-\frac{\Theta}{1+\alpha\Theta^2}\sum\limits_{k=1}^{n}\lambda k a_k(S_k-S_k^*)^2\\[2mm]
&+\sum\limits_{k=1}^{n}\frac{\alpha\lambda k a_k S_k^* \Theta^*(\Theta+\Theta^*) }{(1+\alpha(\Theta^*)^2)(1+\alpha\Theta^2)}(S_k-S_k^*)(\Theta-\Theta^*)\\[2mm]
&-\sum\limits_{k=1}^{n}\frac{\alpha\lambda k a_k (S_k^*)^2  (\Theta+\Theta^*) }{(1+\alpha(\Theta^*)^2)(1+\alpha\Theta^2)}(\Theta-\Theta^*)^2,
\end{split}\end{equation}
where
\begin{equation}\label{eq:Fk(M)}
  F_k(m):=\mu(S_k-S_k^*)^2+\big[(\gamma-\mu)m-\delta\big](R_k-R_k^*)(S_k-S_k^*)+(\gamma+\delta)m(R_k-R_k^*)^2.
\end{equation}
In the Appendix B, we show there exists $m^*>0$ such that
\begin{equation}\label{eq:inequ-Fk(M)}
  F_k(m^*)\geq \omega(m^*)(S_k-S_k^*)^2+\nu(m^*)(R_k-R_k^*)^2,
\end{equation}
where $\omega(m^*), \nu(m^*)>0$ are two constants shown in \eqref{eq:B5} and \eqref{eq:B6}. For simplicity, in the following, we denote $\omega(m^*)$ by $\omega$, $\nu(m^*)$ by $\nu$ respectively.

Substituting \eqref{eq:inequ-Fk(M)} into \eqref{eu9}, we have
\begin{equation}\label{eu11}
\begin{split}
V'\leq &-\sum\limits_{k=1}^{n}\omega a_k(S_k-S_k^*)^2-\sum\limits_{k=1}^{n}\nu a_k(R_k-R_k^*)^2-\frac{\Theta}{1+\alpha\Theta^2}\sum\limits_{k=1}^{n}\lambda k a_k(S_k-S_k^*)^2\\[2mm]
&+\sum\limits_{k=1}^{n}\frac{\alpha\lambda k a_k S_k^* \Theta^*(\Theta+\Theta^*) }{(1+\alpha(\Theta^*)^2)(1+\alpha\Theta^2)}(S_k-S_k^*)(\Theta-\Theta^*)\\[2mm]
&-\sum\limits_{k=1}^{n}\frac{\alpha\lambda k a_k (S_k^*)^2  (\Theta+\Theta^*) }{(1+\alpha(\Theta^*)^2)(1+\alpha\Theta^2)}(\Theta-\Theta^*)^2
\\[2mm]
=&-\frac{\Theta}{1+\alpha\Theta^2}\sum\limits_{k=1}^{n}\lambda k a_k(S_k-S_k^*)^2-\sum_{k=1}^{n}
\nu a_k (R_k-R_k^*)^2\\[2mm]
&-\sum\limits_{k=1}^{n}\omega a_k\left\{X_k^2-\frac{\alpha\lambda k S_k^* \Theta^*(\Theta+\Theta^*) }{\omega(1+\alpha(\Theta^*)^2)(1+\alpha\Theta^2)}X_kY +\frac{\alpha\lambda k  (S_k^*)^2  (\Theta+\Theta^*) }{\omega(1+\alpha(\Theta^*)^2)(1+\alpha\Theta^2)}Y^2\right\},
\end{split}\end{equation}
where
\begin{equation}
\begin{split}
X_k=S_k-S_k^*,\quad\quad Y=\Theta-\Theta^*.\\[2mm]
\end{split}\end{equation}

Below we prove that
\begin{equation}\label{x2}\begin{split}
& X_k^2-\frac{\alpha\lambda k S_k^* \Theta^*(\Theta+\Theta^*) }{\omega(1+\alpha(\Theta^*)^2)(1+\alpha\Theta^2)}X_kY  +\frac{\alpha\lambda k  (S_k^*)^2  (\Theta+\Theta^*) }{\omega(1+\alpha(\Theta^*)^2)(1+\alpha\Theta^2)}Y^2\geq 0.
\end{split}
\end{equation}
For this purpose, it suffices to show the following
\begin{equation}\label{eu10}
\begin{split}
\Delta:=&\left[\frac{\alpha\lambda k   S_k^*\Theta^* (\Theta+\Theta^*) }{\omega(1+\alpha(\Theta^*)^2)(1+\alpha\Theta^2)}\right]^2-4\frac{\alpha\lambda k  (S_k^*)^2  (\Theta+\Theta^*) }{\omega(1+\alpha(\Theta^*)^2)(1+\alpha\Theta^2)}\\[2mm]
=&\frac{\alpha\lambda k  (S_k^*)^2  (\Theta+\Theta^*) }{\omega(1+\alpha(\Theta^*)^2)(1+\alpha\Theta^2)}\left[\frac{\alpha\lambda k  (\Theta^*)^2 (\Theta+\Theta^*) }{\omega(1+\alpha(\Theta^*)^2)(1+\alpha\Theta^2)}-4\right]\leq 0.\\[2mm]
\end{split}\end{equation}
Notice that $0\leq \Theta, \Theta^* \leq 1$. Then when $\alpha \leq \frac{2\omega}{n\lambda}$, obviously we have
\begin{equation}\label{eq:alphal}
\frac{\alpha\lambda k  (\Theta^*)^2 (\Theta+\Theta^*) }{\omega(1+\alpha(\Theta^*)^2)(1+\alpha\Theta^2)}-4  \leq \frac{2\alpha \lambda n}{\omega}-4 \leq 0,
\end{equation}
which implies  \eqref{eu10}.

On the other hand, by the inequalities  $a^2+b^2\geq\frac12(a+b)^2$ and $a^2+b^2\geq 2 ab$, we have
\begin{equation}\label{eu6}
\begin{split}
(1+\alpha(\Theta^*)^2)(1+\alpha\Theta^2)=&1+\alpha\big[\Theta^2+(\Theta^*)^2\big]+\alpha^2\Theta^2(\Theta^*)^2\\[2mm]
\geq &1+\alpha\big[\Theta^2+(\Theta^*)^2\big]\\[2mm]
\geq &1+\frac{\alpha}{2}(\Theta+\Theta^*)^2,
\end{split}\end{equation}
and
\begin{equation}\label{eu7}
\begin{split}
I_k^*\leq \frac{\lambda \delta k \Theta^*}{\gamma(\delta+\mu)(1+\alpha(\Theta^*)^2)}\leq \frac{\lambda\delta k}{2\sqrt{\alpha}\gamma(\delta+\mu)},\quad k=1,2,\cdots,n.
\end{split}\end{equation}
It follows from \eqref{eu7} that
\begin{equation}\label{eu8}
\begin{split}
 \Theta^*=&\langle k \rangle^{-1} \sum\limits_{k=1}^n kP(k)I_k^*
 \leq \frac{\lambda\delta}{2\sqrt{\alpha}\gamma(\delta+\mu)\langle k \rangle}\sum\limits_{k=1}^n
k^2P(k)=\frac{R_0}{2\sqrt{\alpha}}.\\
\end{split}\end{equation}
Substituting \eqref{eu6} and \eqref{eu8} into \eqref{eu10} and using the assumption $\alpha\geq \frac{\lambda^2 n^2(R_0)^2}{16 \omega^2}$, we obtain
\begin{equation*}
\begin{split}
 \frac{\alpha\lambda k  (\Theta^*)^2 (\Theta+\Theta^*) }{\omega(1+\alpha(\Theta^*)^2)(1+\alpha\Theta^2)}-4  \leq & \frac{\lambda k (\Theta^*)^2}{\frac{\omega}{2}(\Theta+\Theta^*)}-4 \\[2mm]
 \leq & \frac{2\lambda k \Theta^*}{\omega}-4 \\ \leq &  \frac{\lambda n R_0}{\omega\sqrt{\alpha}}-4 \leq 0,
\end{split}\end{equation*}
which implies \eqref{eu10}. Thus \eqref{x2} holds when $\alpha\geq \alpha_c$ or $\alpha\leq \alpha_l$, so it follows from \eqref{eu11} that
\begin{equation}\begin{split}
\frac{dV}{dt}\leq &-\frac{\Theta}{1+\alpha\Theta^2}\sum\limits_{k=1}^{n}\lambda k a_k(S_k-S_k^*)^2-\sum_{k=1}^{n} \nu\delta a_k (R_k-R_k^*)^2\leq 0.\\[2mm]
\end{split}\end{equation}
Also $\frac{d V}{dt}=0$ if
and only if $S_k=S_k^*$ and $R_k=R_k^*$ for $k=1,2\cdots,n$. According to   the LaSalle's invariant principle\cite{LaSalle1976},
$(S_1^*,R_1^*,\ldots,S_n^*, R_n^*)$ is globally asymptotically stable, so is
 $(S_1^*,I_1^*,\ldots,S_n^*,I_n^*)$.
 The proof is completed.
\end{proof}

\begin{remark}
From Theorem \ref{th2}, we can see under the condition $R_0>1$, the endemic equilibrium is globally asymptotically stable if $\alpha\geq \alpha_c$ or $\alpha \leq a_l$. In the section of numerical experiment, we find that it seems that in the case of $R_0>1$, for any  $\alpha>0$ the system \eqref{eq:nm-si-system} is globally attractive. Also it is worth noting when  $\alpha=0$, the proposed system \eqref{eq:nm-sirs-simplify} can be seen as a generalization to the model proposed in Chen and Sun \cite{Chen2014196}. The global stability of the endemic equilibrium is proved under no assumptions on the vaccination percentage $\mu$ and the recovered rate $\gamma$, which improves the results obtained in  \cite{Chen2014196}. Although the parameter $\alpha>0$ does not affect the epidemic threshold, we can show easily that
\begin{equation}\label{eq:important}
  \limsup_{t\to \infty} I_k(t)<\frac{\lambda k}{2\gamma \sqrt{\alpha}},
\end{equation}
which together with equation \eqref{eu7}, we can see that a larger $\alpha$ can accelerate the extinction of the disease and reduce the level of disease, which is further verified in the numerical results.
\end{remark}

\section{Numerical results}
In this section, we will give some numerical simulations to illustrate the theoretical findings. Our simulations are based on the BA network with maximum degree $n=500$. The degree distribution $P(k)=ck^{-m}$ , and the constant $c$ satisfying $\sum_{k=1}^{n}P(k)=1$.

In all of the experiments below, we fix $m=3, \lambda=0.01$ and  $\delta=0.02$.  In Fig.\ref{fig1}, we choose $\gamma=0.01$ and $\mu=0.2$. By simple computations, we can derive $R_0\approx 0.3759< 1$. The time series of infected nodes with degree $k=100,200,\cdots, 500$ are shown in Fig.\ref{fig1a}($\alpha=0.1$) and Fig.\ref{fig1b}($\alpha=10$). The initial values are $S_k(0)=0.3, I_k(0)=0.1, R_k(0)=0.6$ for any degree $k$.
It can be seen that the disease eventually becomes extinct when $R_0<1$, which implies
that the disease-free equilibrium is stable. Meanwhile it should be noted that, although for both choices of $\alpha>0$, the density of the infected node with a bigger degree experiences a peak before it leads to the disease-free equilibrium, the larger $\alpha$ is, the lower the peak level is, which means that large $\alpha$ can effectively degrade the epidemic peak when it breaks out initially.
To show the global stability of the disease-free equilibrium, we choose 10 different initial values to plot the time series of $I_{400}(t)$ in Fig.\ref{fig2}. Obviously all trajectories converge to the trivial equilibrium and this supports the global stability of disease-free equilibrium. Also as is shown in Fig.\ref{fig2b}, when $R_0<1$, the larger $\alpha>0$ is , the faster the disease dies out.

\begin{figure}
  \centering
  \subfigure[$\alpha=0.1$]{
    \label{fig1a} 
    \includegraphics[width=0.48\textwidth]{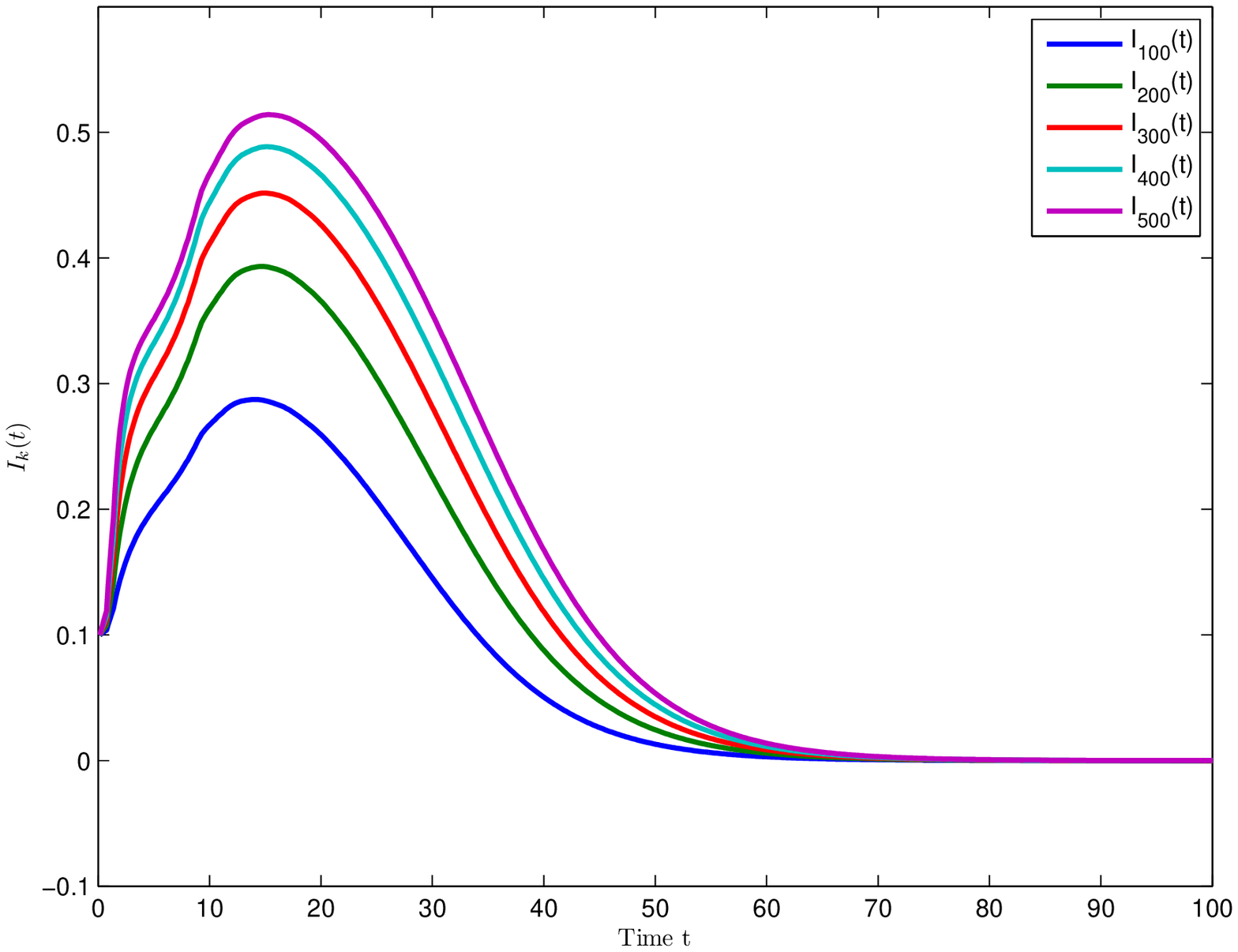}}
  \hfill
  \subfigure[$\alpha=10$]{
    \label{fig1b} 
    \includegraphics[width=0.48\textwidth]{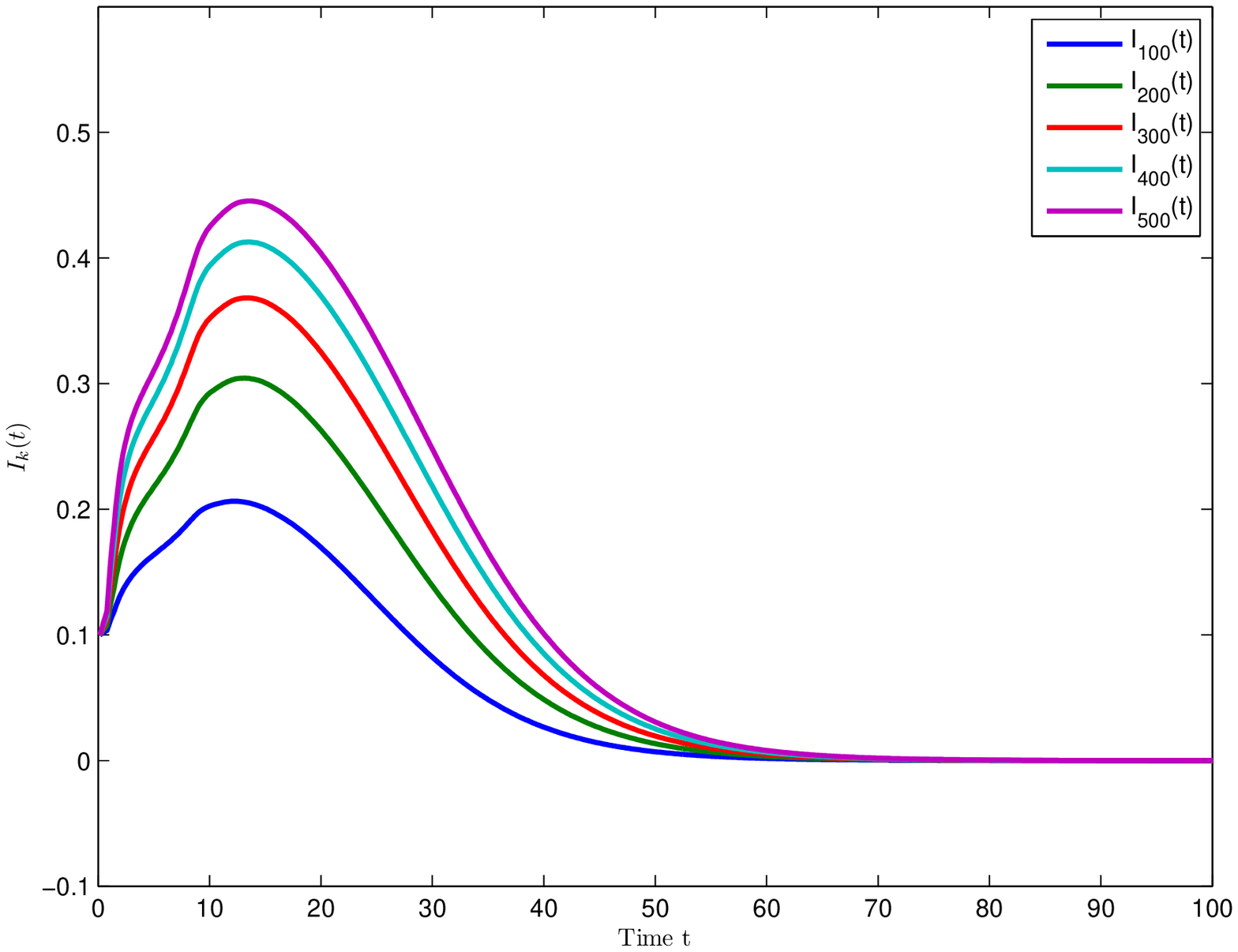}}

  \caption{The time series of infected nodes in \eqref{eq:nm-si-system} with $R_0 < 1$. }
  \label{fig1} 
\end{figure}

\begin{figure}
  \centering
  \subfigure[$\alpha=0.1$]{
    \label{fig2a} 
    \includegraphics[width=0.48\textwidth]{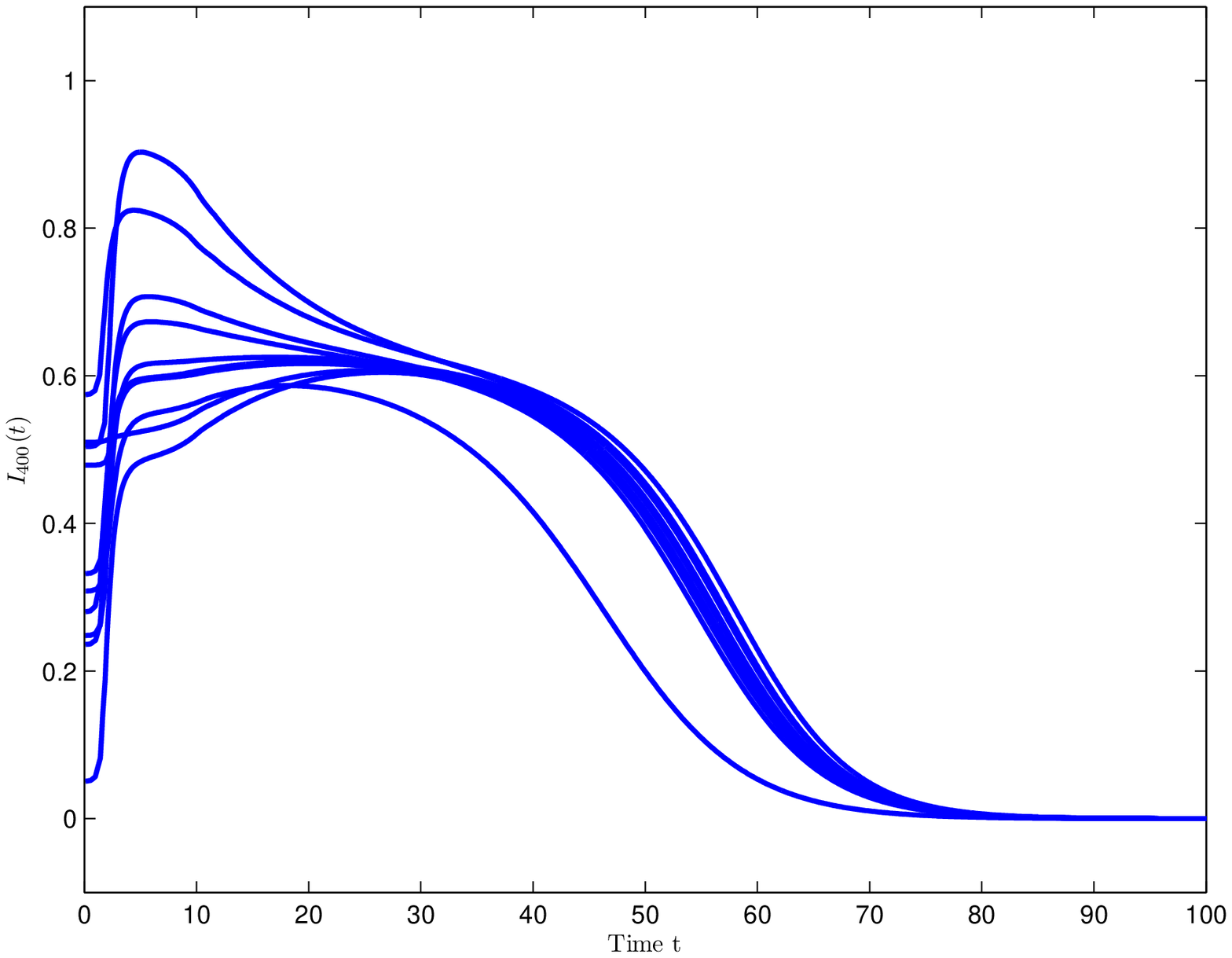}}
  \hfill
  \subfigure[$\alpha=10$]{
    \label{fig2b} 
    \includegraphics[width=0.48\textwidth]{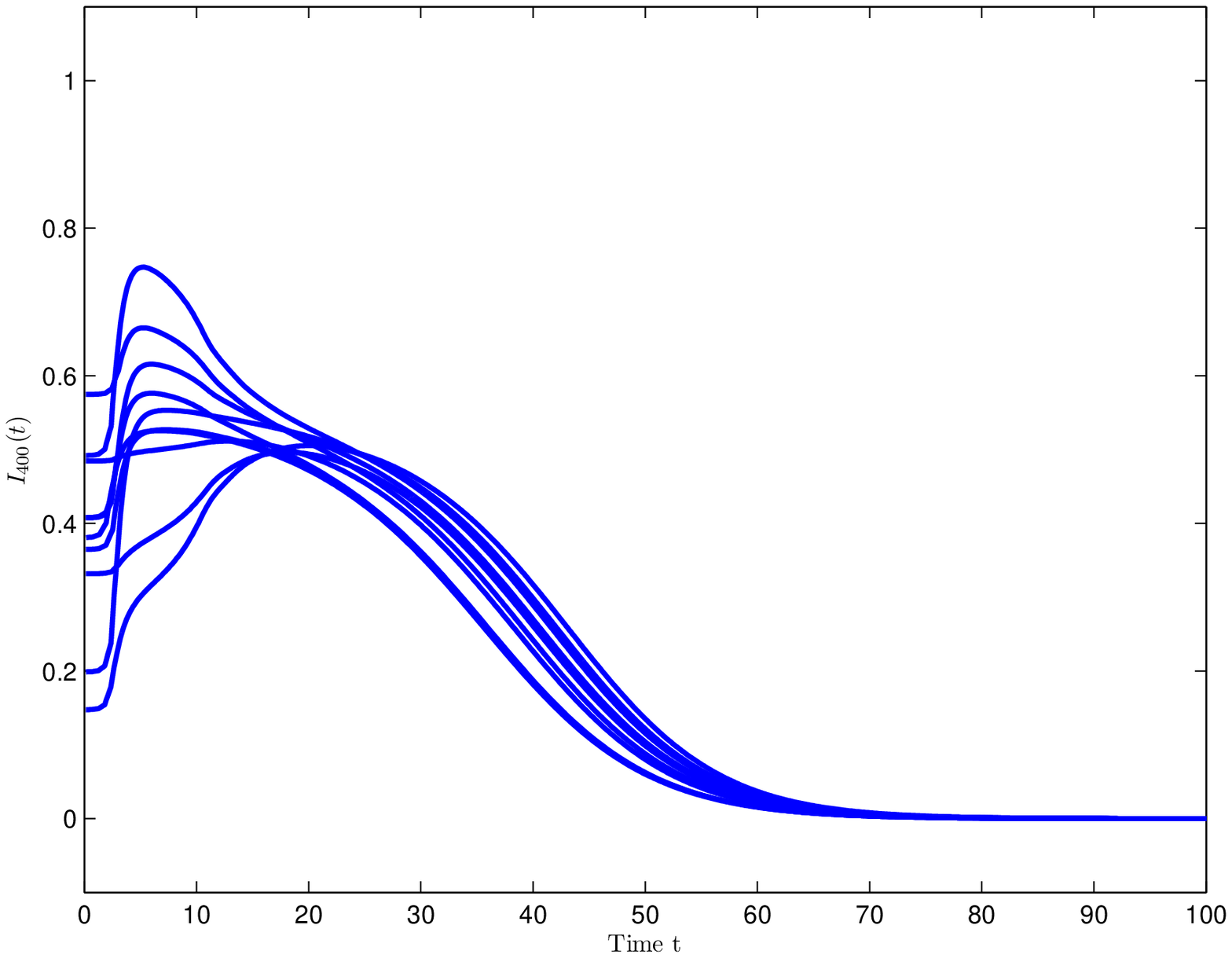}}
  \caption{The time series of $I_{400}(t)$ with 10 different initial values and $R_0<1$. }
  \label{fig2} 
\end{figure}

In Fig.\ref{fig3} and Fig.\ref{fig3m}, we choose $\mu=0.02$ and $\gamma=0.01$. Obviously we have $\mu>\gamma$. By simple computations, we can derive $R_0\approx 2.0673>1$. Using the equation \eqref{eq:B5}, we obtain $\omega=0.0040$. Thus we have $\alpha_l=0.0016$ and $\alpha_c=417349.00$. We choose $\alpha=10^{-3}$ and $10^6$ respectively so that the condition $\alpha<\alpha_l$ or $\alpha>\alpha_c$ in Theorem \ref{th2} is satisfied. In Fig.\ref{fig4} and Fig.\ref{fig4m}, we choose $\mu=0.001$ and $\gamma=0.01$. Thus we have $\mu < \gamma$ and $R_0\approx 3.9377>1$. Accordingly, the parameters in Theorem \ref{th2} are computed as $\omega=0.0004, \alpha_l=0.0002, \alpha_c\approx 1.3\times 10^8$. So in this case we choose $\alpha=10^{-4}<\alpha_l$ and $\alpha=10^9>\alpha_c$ respectively. As can be see from Fig.\ref{fig3} and Fig.\ref{fig4}, in the case of $R_0>1$ the disease will persist on a positive steady level. Fig.\ref{fig3m} and Fig.\ref{fig4m} display the evolutions of $I_{400}(t)$ for a set of initial
conditions. In both cases, as can be seen from Fig.\ref{fig3} to Fig.\ref{fig4m}, the endemic equilibrium $E^*$ is globally asymptotically stable whenever $R_0>1$ and $\alpha>\alpha_c$ or $\alpha<\alpha_l$, which is in agreement with Theorem \ref{th2}. Again we can see in Fig.\ref{fig3} and Fig.\ref{fig4}, when the disease is endemic, the densities of the infected nodes decreases as $\alpha$ increases, which means that a larger $\alpha$ can accelerate the extinction of the disease and reduce the level of disease.

%

\begin{figure}
  \centering
  \subfigure[$\alpha=10^{-3}$]{
    \label{fig3a} 
    \includegraphics[width=0.48\textwidth]{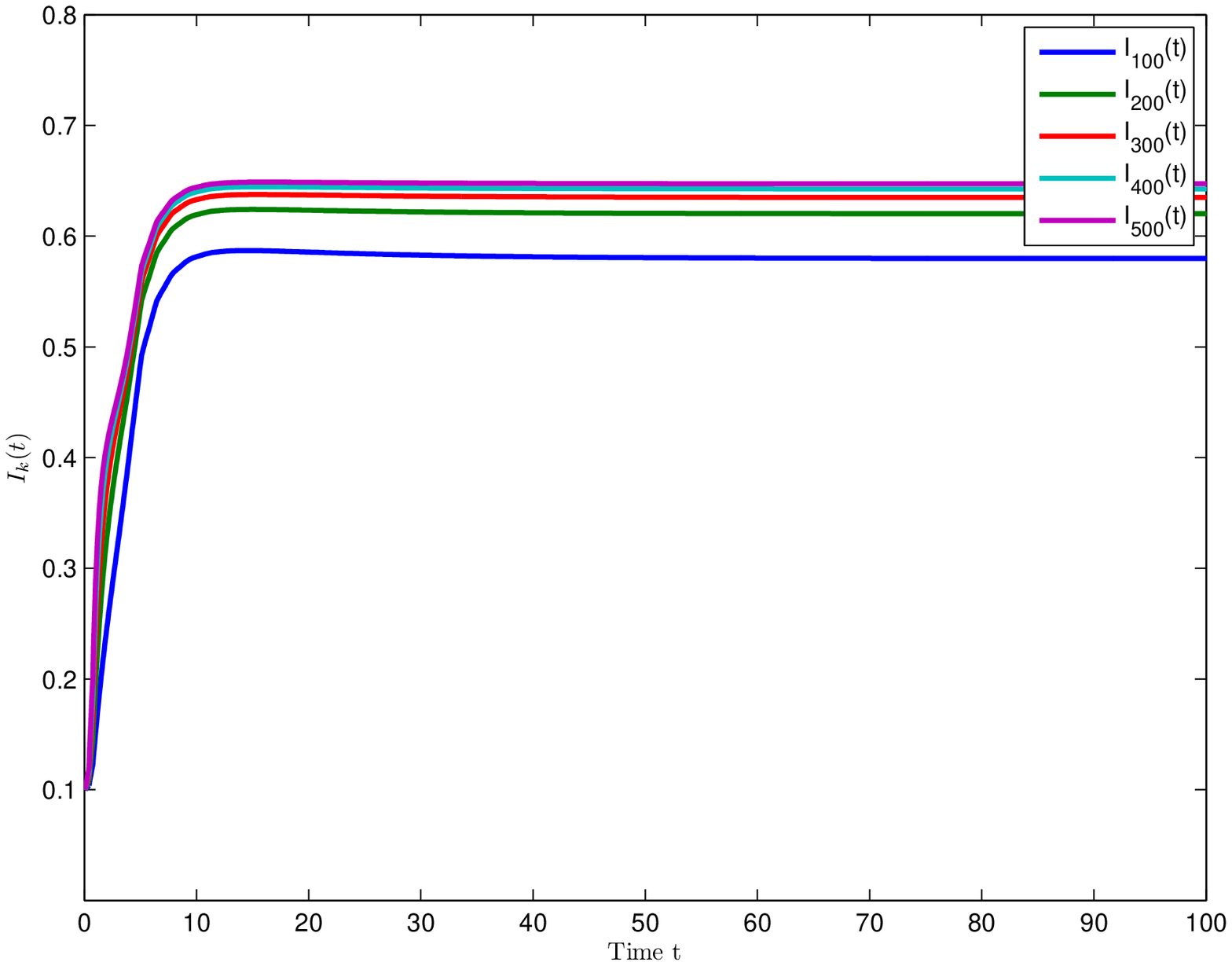}}
  \hfill
  \subfigure[$\alpha=10^6$]{
    \label{fig3b} 
    \includegraphics[width=0.48\textwidth]{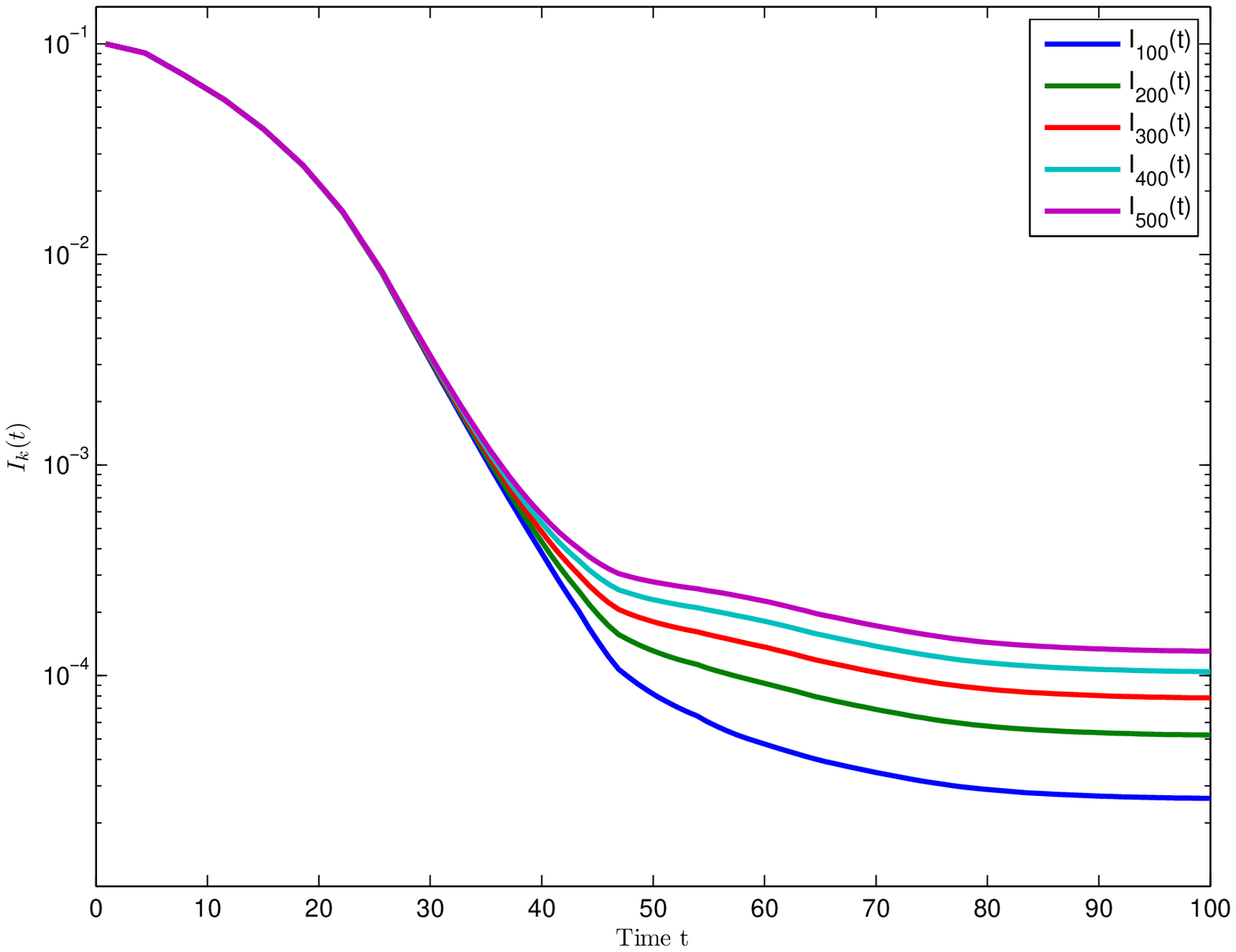}}

  \caption{The time series of infected nodes in \eqref{eq:nm-si-system} with $R_0 > 1$ and $\mu > \gamma$. }
  \label{fig3} 
\end{figure}

\begin{figure}
  \centering
  \subfigure[$\alpha=10^{-3}$]{
    \label{fig3am} 
    \includegraphics[width=0.48\textwidth]{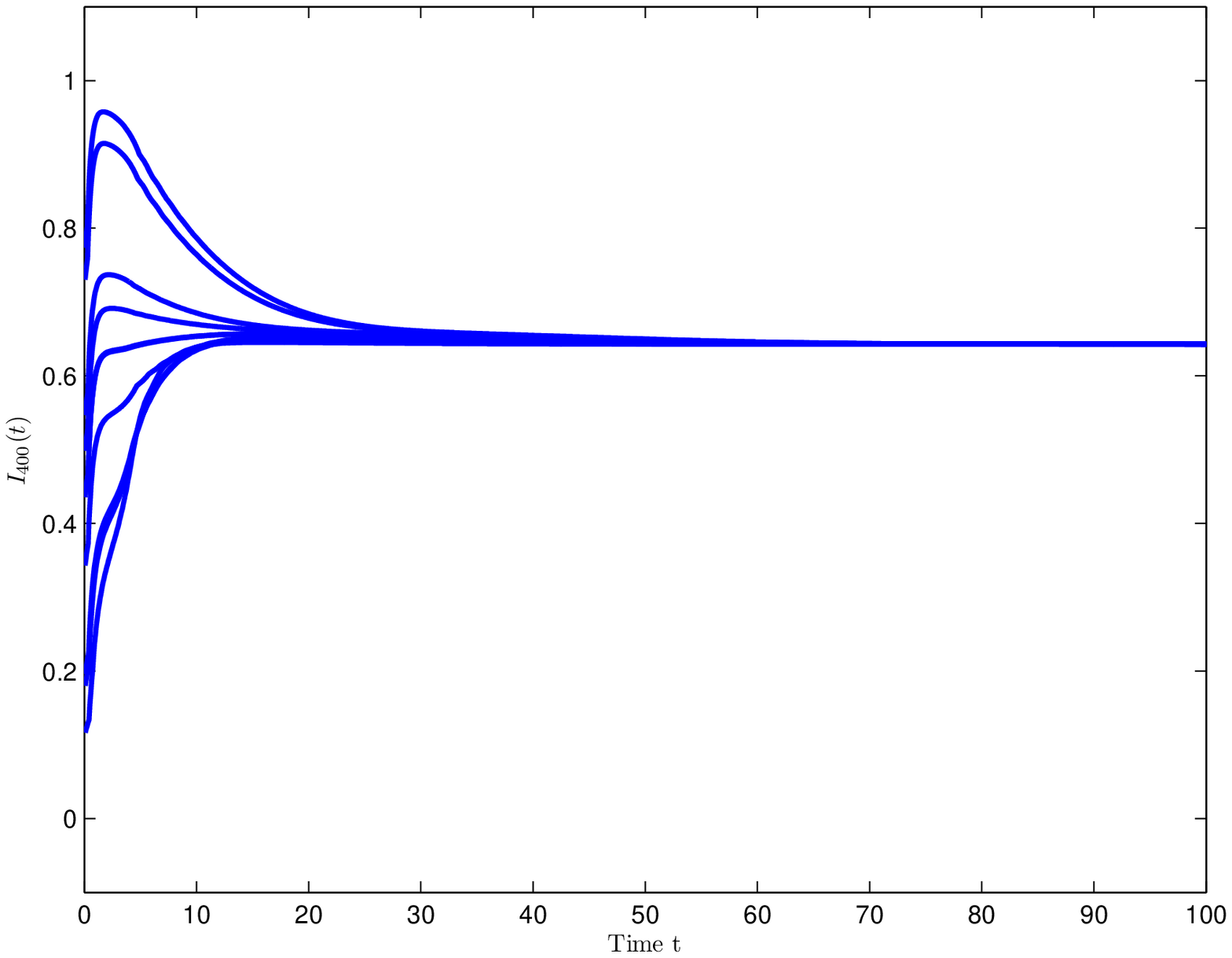}}
  \hfill
  \subfigure[$\alpha=10^6$]{
    \label{fig3bm} 
    \includegraphics[width=0.48\textwidth]{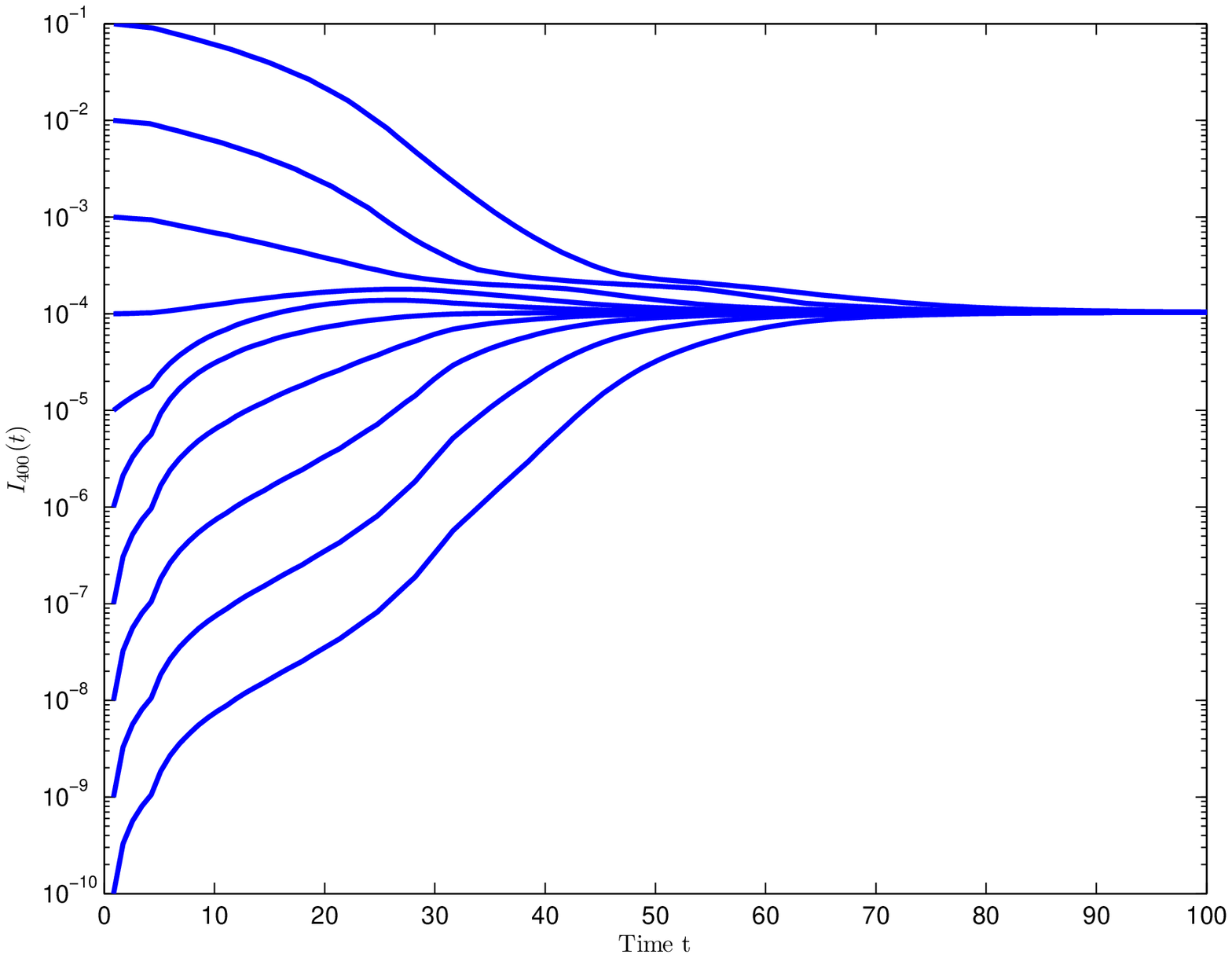}}

  \caption{The time series of $I_{400}(t)$ with 10 different initial values when $R_0>1$ and $\mu>\gamma$. }
  \label{fig3m} 
\end{figure}

%

\begin{figure}
  \centering
  \subfigure[$\alpha=10^{-4}$]{
    \label{fig4a} 
    \includegraphics[width=0.48\textwidth]{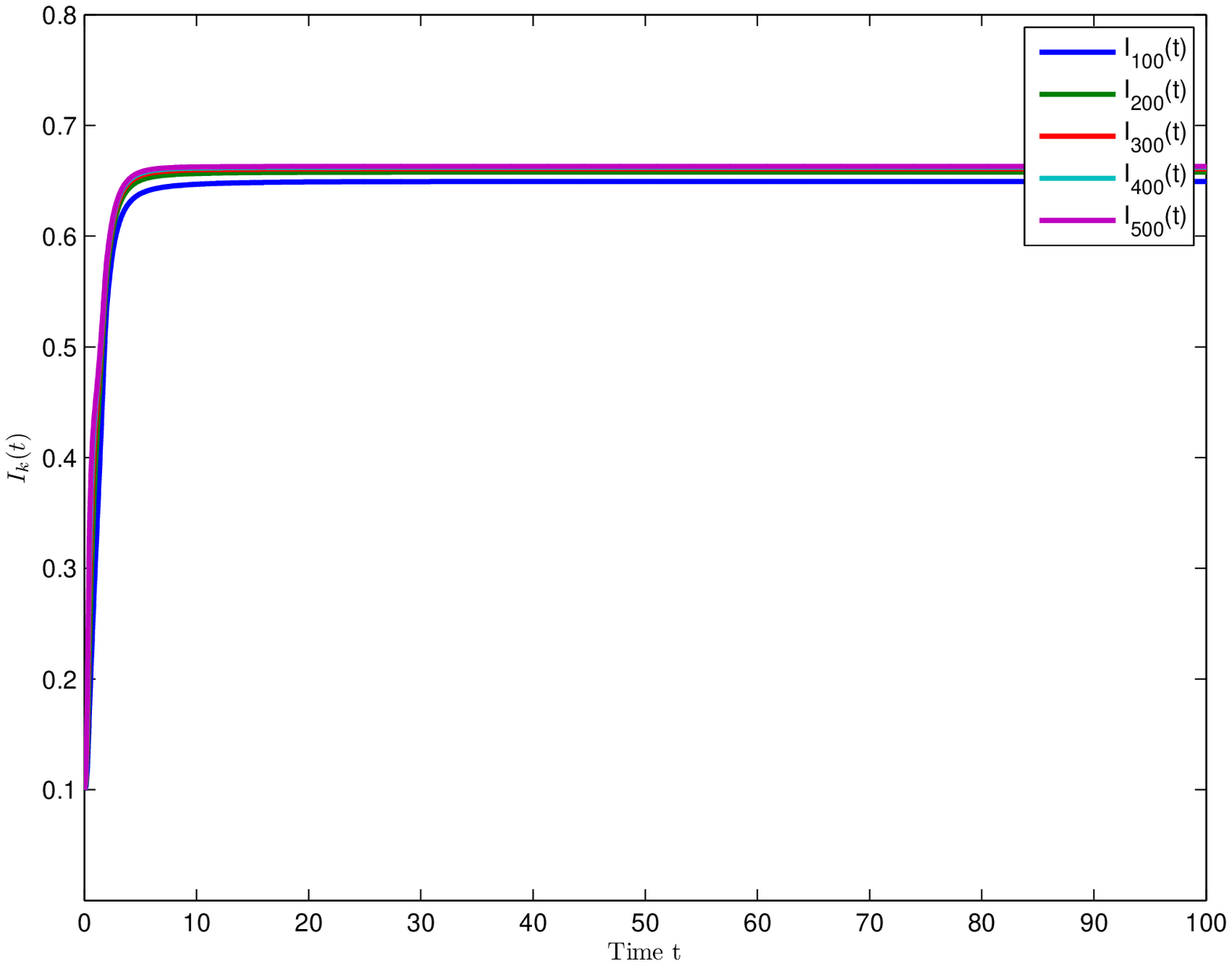}}
  \hfill
  \subfigure[$\alpha=10^9$]{
    \label{fig4b} 
    \includegraphics[width=0.48\textwidth]{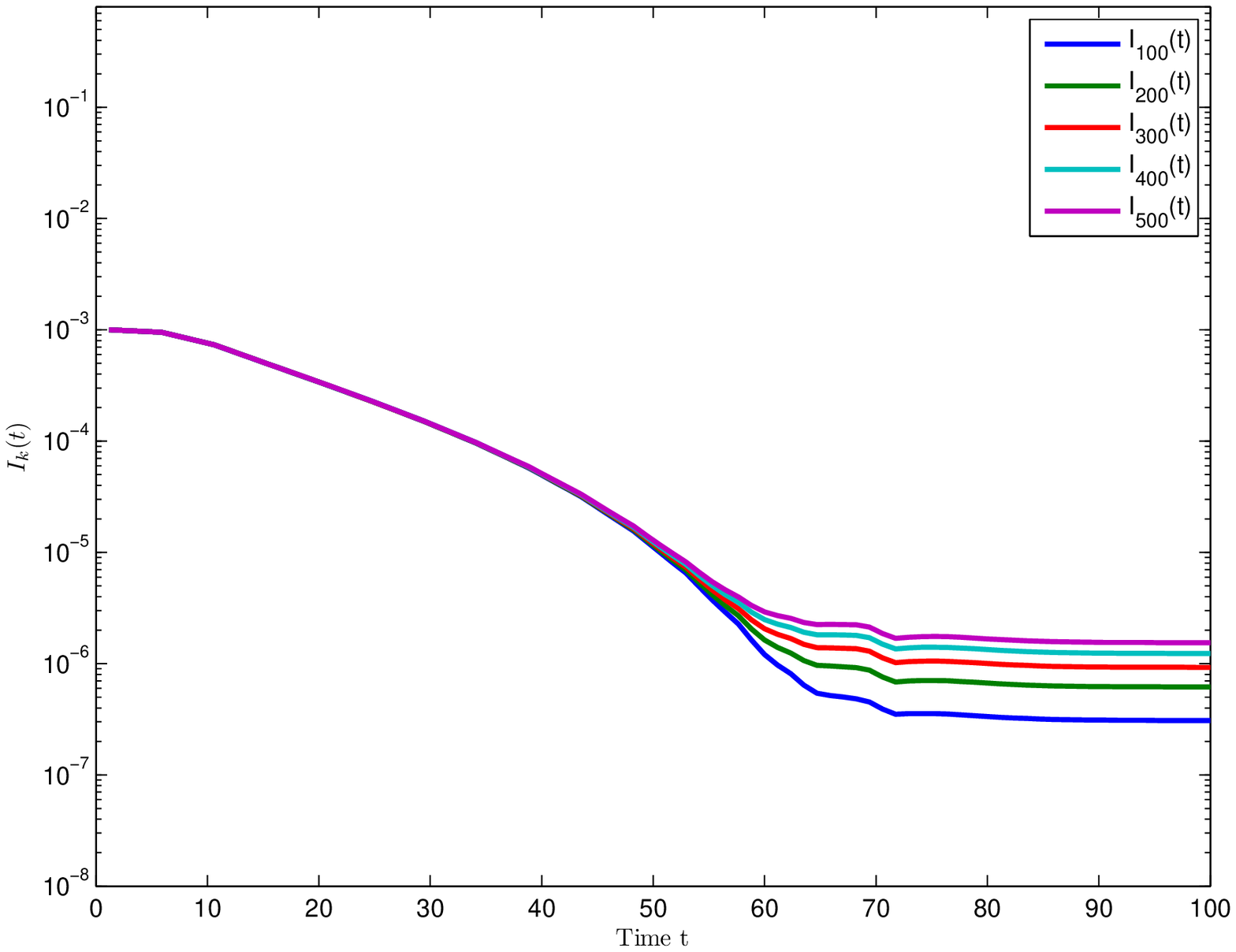}}

  \caption{The time series of infected nodes in \eqref{eq:nm-si-system} with $R_0 > 1$ and $\mu < \gamma$. }
  \label{fig4} 
\end{figure}

\begin{figure}
  \centering
  \subfigure[$\alpha=10^{-4}$]{
    \label{fig4am} 
    \includegraphics[width=0.48\textwidth]{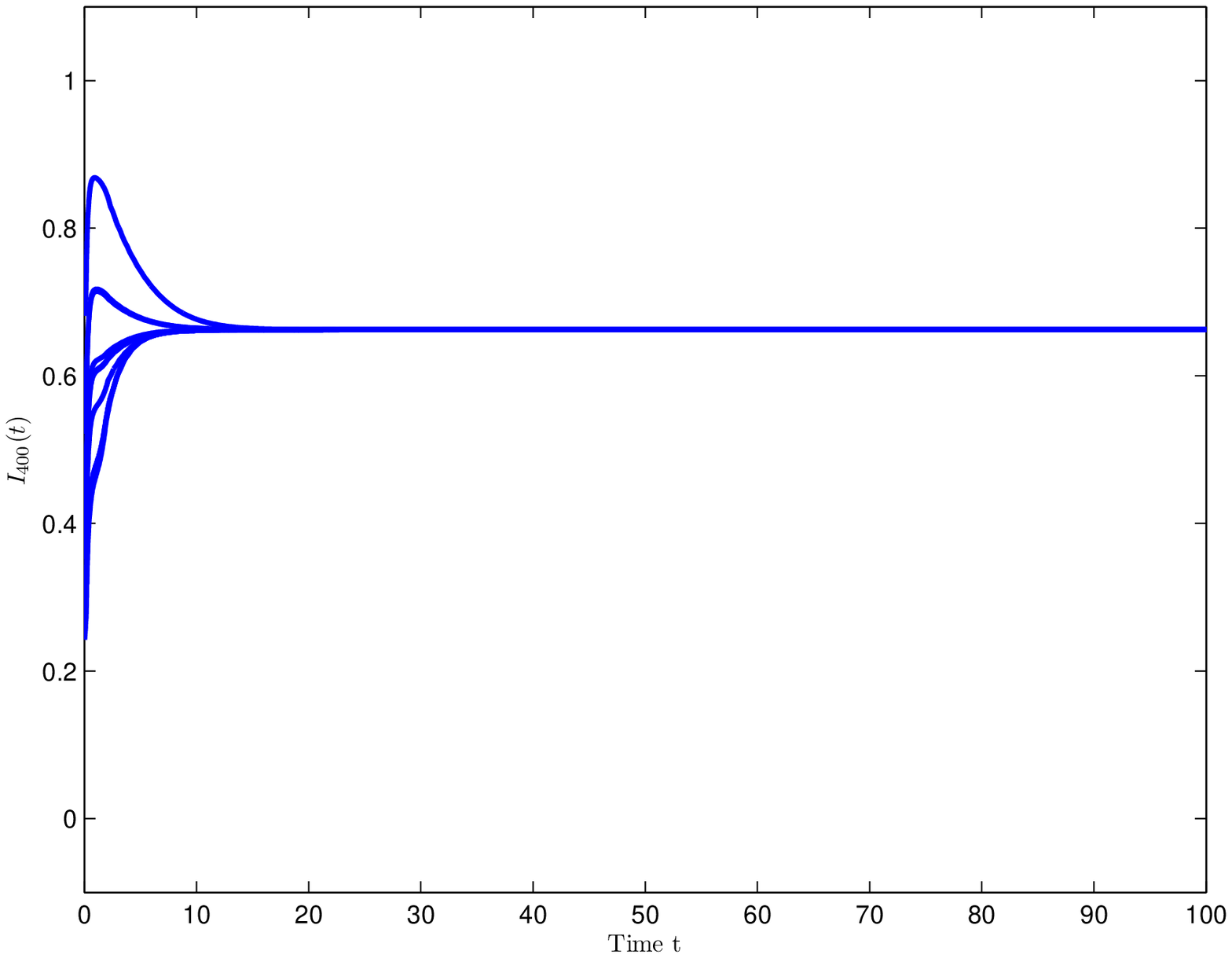}}
  \hfill
  \subfigure[$\alpha=10^9$]{
    \label{fig4bm} 
    \includegraphics[width=0.48\textwidth]{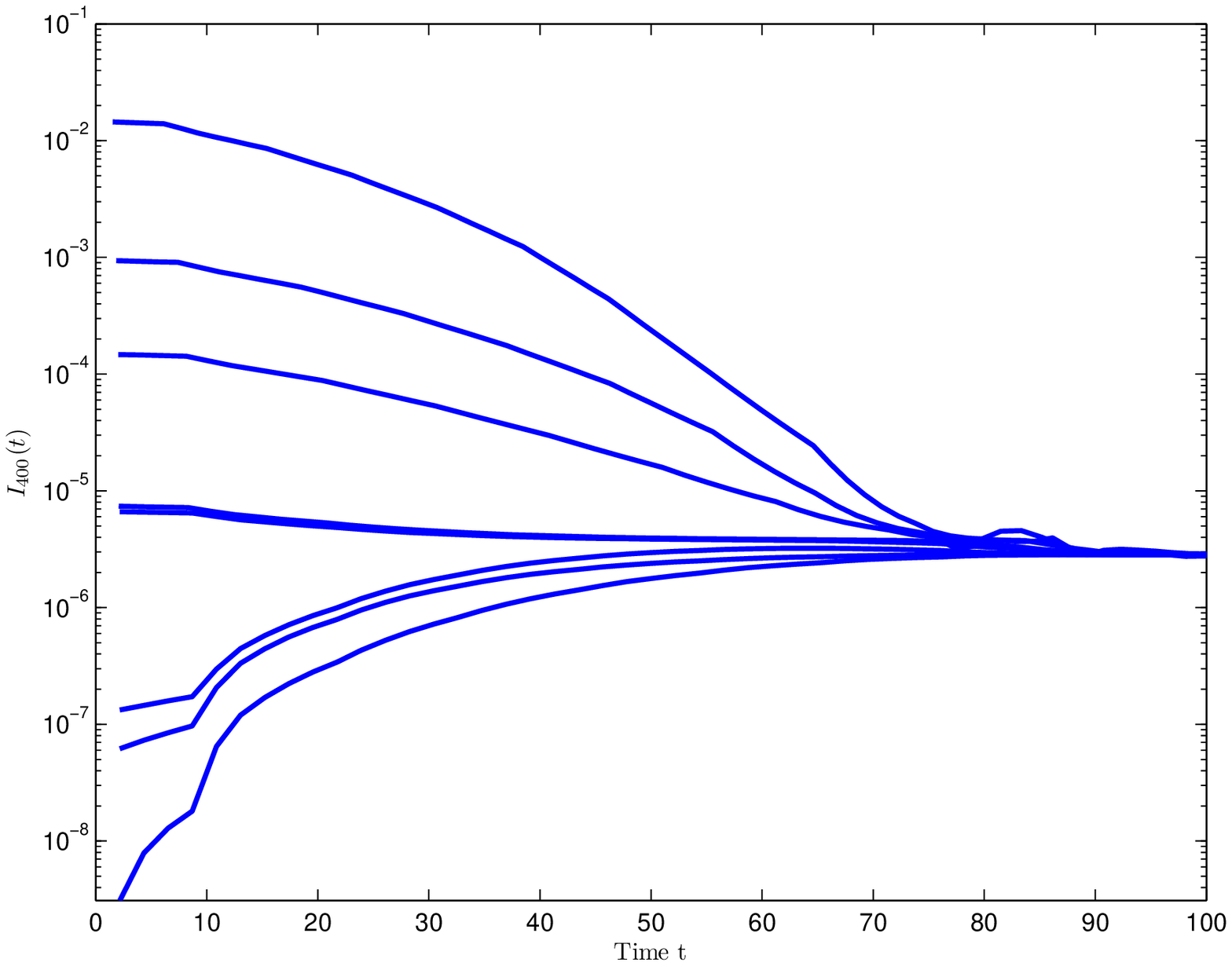}}

  \caption{The time series of $I_{400}(t)$ with 10 different initial values when $R_0>1$ and $\mu<\gamma$. }
  \label{fig4m} 
\end{figure}

When $R_0>1$, Theorem \ref{th2} shows the endemic equilibrium $E^*$ is globally asymptotically stable when $\alpha$ is sufficiently large or small enough. However, as is shown in Fig.\ref{fig5}, for any $\alpha>0$ it seems that $E^*$ is indeed attractive when $R_0>1$. However, the rigourous proof to this conclusion is still mathematically difficult which will be considered later.

%
%
%

\begin{figure}
  \centering
  \subfigure{
    \label{fig5a} 
    \includegraphics[width=0.48\textwidth]{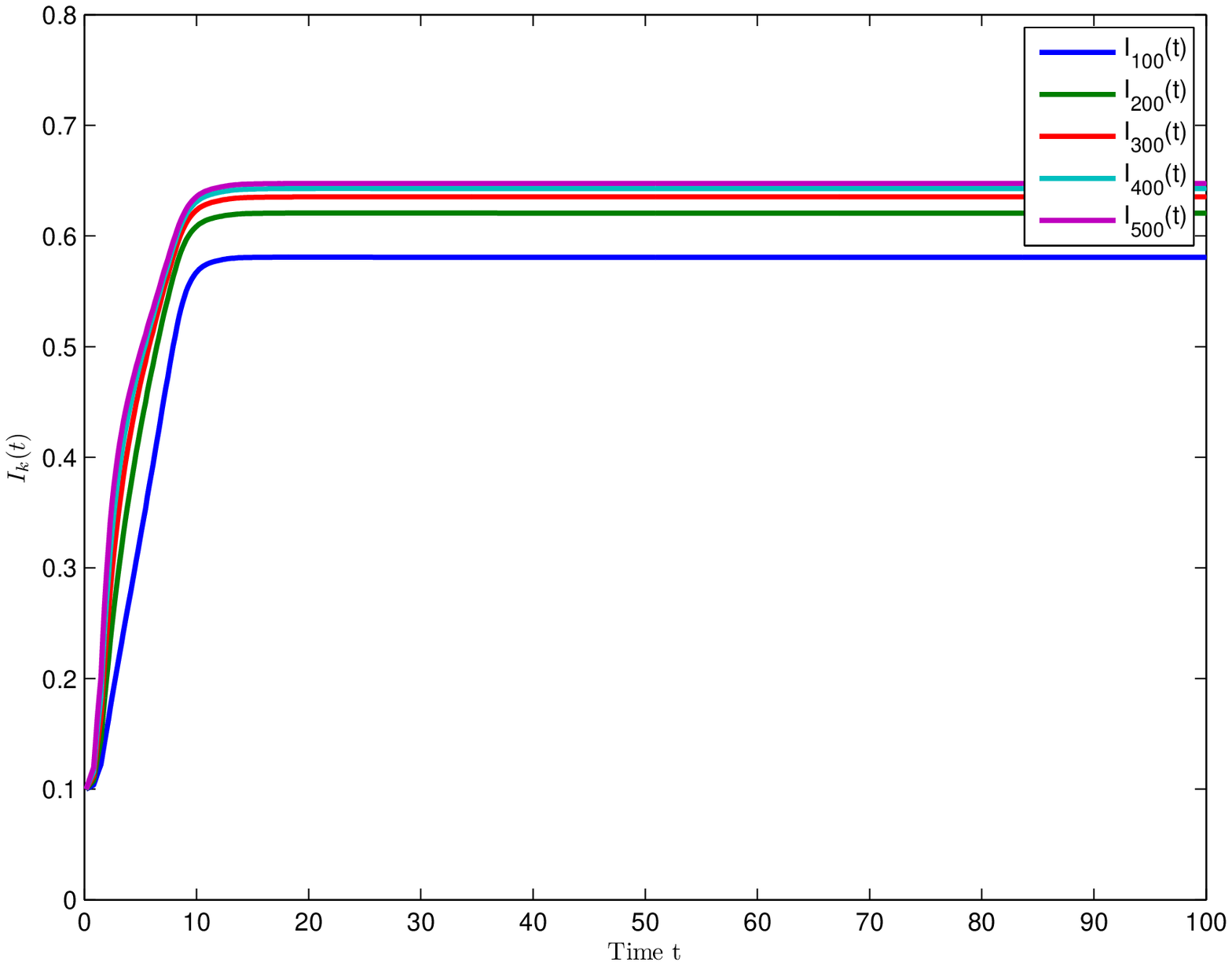}}
  \hfill
  \subfigure{
    \label{fig5b} 
    \includegraphics[width=0.48\textwidth]{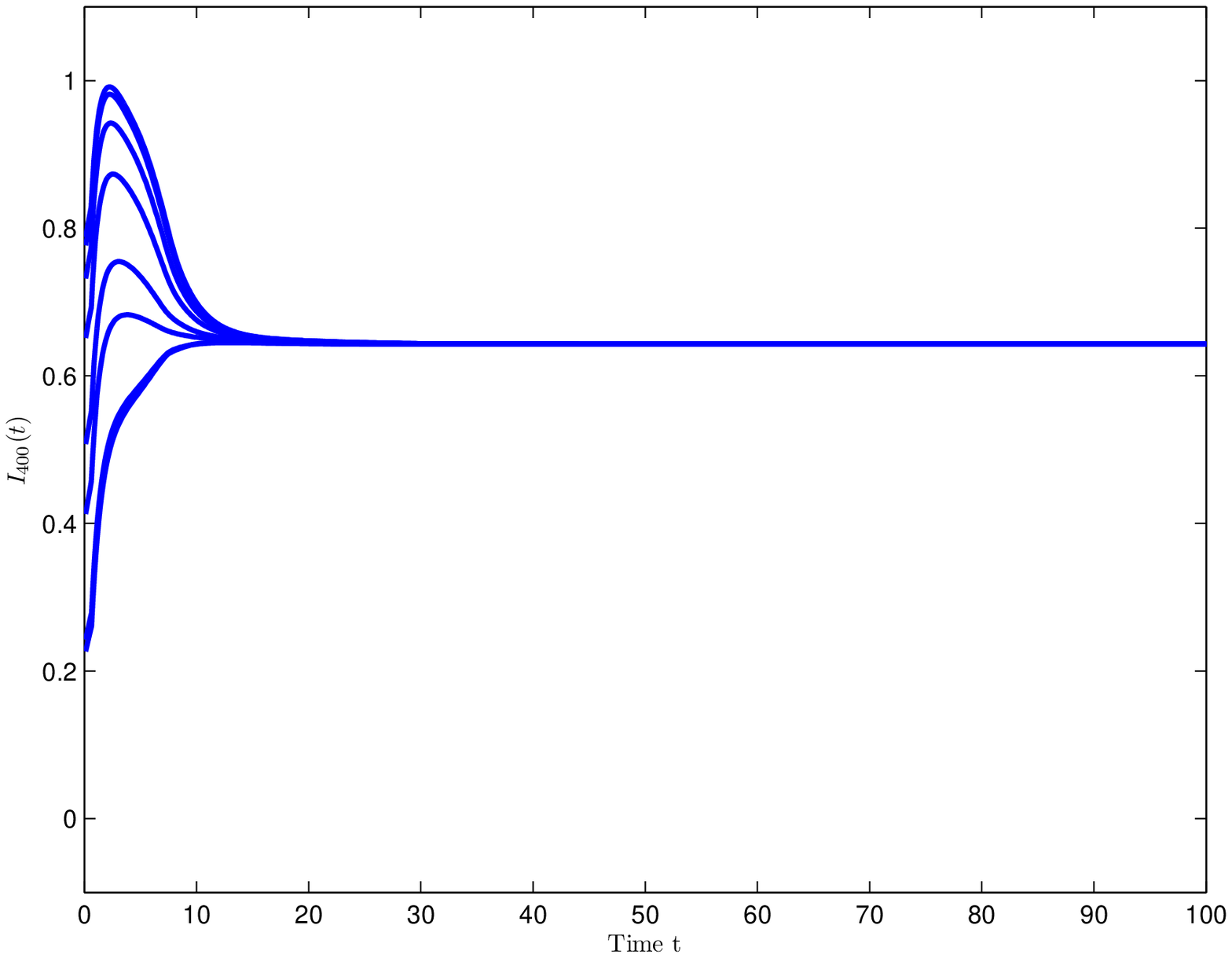}}

  \caption{The time series of infected nodes in \eqref{eq:nm-si-system} with $R_0 > 1$ and $\alpha=10$. }
  \label{fig5} 
\end{figure}

\section{Conclusion}
In this paper, we have discussed an SIRS epidemic model with vaccination and nonomontone incidence rate on complex networks. The nonlinear incidence rate can be used to interpret the psychological effect, namely, the incidence rate would decrease at high infective levels due to the quarantine of infected individuals or the protection measures taken by the susceptible ones. Although the parameter $\alpha$ does not affect the epidemic threshold, it plays a role in weakening the spreading of disease, as can be seen in equations \eqref{eu7} and \eqref{eq:important}. We have shown by Lyapunov function that without additional conditions on the constants $\mu$ and $\gamma$, the endemic equilibrium of system \eqref{eq:nm-sirs-simplify} is globally asymptotically stable, thus the disease becomes endemic. The results can be viewed as an important supplement to the result in \cite{Chen2014196}. Furthermore, numerical simulations are done. As has been seen, the simulations verify the globally asymptotical stability of $E^*$ and controlling effect of the inhibitory factor $\alpha>0$ on reducing the disease level.

\vskip0.4cm
\section*{Acknowledgments}
The research was supported  in part by the NSFC (grants 61002039, 61572018, 11571062),
the Program for Liaoning Excellent Talents in University (grant
LJQ2013124)
  and  the Fundamental Research Fund  for the Central Universities  (grants DC201502050404,  DC201502050202).
\par
\begin{appendices}
  \section{Proof of Lemma \ref{lm:1}}
Since $S_k(0)>0$ for $k=1,2,\cdots,n$, by continuity, we have that for any $k$, there exists
some $t_k>0$ such that $S_k(t)>0$  for any $t\in (0, t_k)$. Thus, the set
${\cal E}_k:=\{\tau>0; S_k(t))>0,\forall t \in (0, \tau)\}$ is not empty for
any $k=1,2,\cdots,n$. Let $\alpha_k=\sup{\cal E}_k$ for $k=1,2,\cdots,n$.
Then $\alpha_k>0$ for $k=1,2,\cdots,n$.

We will show $S_k(t)>0$ for all $t>0$ and all $k=1,2,\cdots,n$. To prove this, it suffices to show that  $\alpha_{k}=\infty$ for $k=1,2,\cdots,n$. Suppose, on the contrary, that $\alpha_m<\infty$ for some $m \in
\{1,2,\cdots,n\}$. Then $S_m(t)>0$ for all $t\in (0, \alpha_m)$. By continuity, there must have $S_m(\alpha_m)=0$. In the following, we will show this is not true.

Firstly, we will prove $\Theta(t)>0$ for all $t>0$. From the second equation of
\eqref{eq:nm-si-system}, we have
\begin{equation}\label{e030}
\begin{split}
&\frac{d\Theta(t)}{dt}= \left[\frac{\lambda}{\langle
k\rangle}\sum\limits_{k=1}^nk^2P(k)\frac{S_k(t)}{1+\alpha \Theta^2(t)}-\gamma\right]\Theta(t)=:X(t)\Theta(t),
\end{split}
\end{equation}
which gives
\begin{equation*}
\begin{split}
 \Theta(t)=\Theta(0) \exp\left\{\int_0^t X(s) ds\right\}>0,\quad \forall t>0.
\end{split}
\end{equation*}

%
\newcommand{\bs}{\delta}
\newcommand{\ds}{\mu}
\newcommand{\gs}{\frac{\Theta(s)}{1+\alpha\Theta^2(s)}}
Thus, for any $t \in (0,\alpha_m)$, by the second equation of
\eqref{eq:nm-si-system}, we also obtain that
\begin{equation}
\begin{split}\frac{d}{dt}\left(\exp\{\gamma t\}I_m(t)\right)
=\lambda m
S_{m}(t)\frac{\Theta(t)}{1+\alpha\Theta^2(t)}\exp\{\gamma t\}>0,\end{split}
\end{equation}
which means that  $I_m(t)\geq 0$ for all $t\in (0,\alpha_m]$. 

Next we will prove that $S_m(t)+I_m(t)< 1$ for
all $t\in (0,\alpha_m]$. Adding the two equations in \eqref{eq:nm-si-system}
yields
\begin{equation}\label{e025}
\begin{split}
 \frac{d (S_k+I_k)}{dt}=\bs-(\bs+\mu)(S_k+I_k)+(\mu-\gamma) I_k,\quad k=1,2,\cdots,n,
\end{split}
\end{equation}
which is equivalent to
\begin{equation}\label{e026}
\begin{split}
 \frac{d (S_k+I_k)}{dt}=\bs-(\bs+\gamma)(S_k+I_k)+(\gamma-\mu) I_k,\quad k=1,2,\cdots,n.
\end{split}
\end{equation}
If $\mu\leq \gamma$,
we derive from \eqref{e025} with $k=m$ that
\begin{equation*}
\begin{split}
 \frac{d (S_m+I_m)}{dt}\leq & \bs-(\bs+\mu)(S_m+I_m)\\
 < &(\bs+\mu)(1-S_m-I_m),\quad \forall t \in (0, \alpha_m].
\end{split}
\end{equation*}
This implies that
$\frac{d}{dt}\left(\exp\{(\bs+\mu)t\}(1-S_{m}(t)-I_m(t))\right)>0$ for all $
t \in (0, \alpha_m]$, which together with $S_m(0)+I_m(0)\leq 1$ yields that
$S_m +I_m  < 1$ on $(0, \alpha_m]$. While if $\mu>\gamma$, we derive from
\eqref{e026} that
\begin{equation*}
\begin{split}
 \frac{d (S_m+I_m)}{dt}\leq & \bs-(\bs+\gamma)(S_m+I_m)\\
<&(\bs+\gamma)(1-S_m-I_m),\quad \forall t \in (0, \alpha_m].
\end{split}
\end{equation*}
This gives that $\frac{d}{dt}\left(\exp\{(\bs+\gamma)t\}(1-S_m-I_m)\right)>0$
on $(0, \alpha_m]$, which also implies that $S_m+I_m <1$ on $(0, \alpha_m]$.

Thus, it follows from the first equation of \eqref{eq:nm-si-system} that
\begin{equation*}
\begin{split}
&\frac{d}{dt}\left(\exp\left\{\int_0^t\left(\ds+\lambda m\gs\right)ds\right\}S_m(t)\right)\bigg{|}_{t=\alpha_m}\\
&=\bs(1-S_{m}(\alpha_m)-I_{m}(\alpha_m))\exp\left\{\int_0^{\alpha_m}\left(\ds+\lambda m\gs\right)ds\right\}>0,
\end{split}
\end{equation*}
which means  $S_m(t)<S_m(\alpha_m)=0$ for $t\in (\alpha_m-\epsilon,\alpha_m) \subset (0,\alpha_m)$, where $\epsilon$ is an arbitrarily positive constant.  This is apparently a contradiction.  Thus we have $\alpha_{k}=\infty$ for
$k=1,2,\cdots,n$. In conclusion, we have $S_k(t)>0 $
 for all $t>0$.

Then in the same way, we
get $I_k(t)>0$ from the second equation of \eqref{eq:nm-si-system} and $S_k(t)+I_k(t)<1$ from \eqref{e025} and \eqref{e026} for all $t>0$. Summing up the above arguments and combining the fact
$S_k(t)+I_k(t)+R_k(t)=1$, we have $0<S_k(t),I_k(t),R_k(t)<1$ and $0<\Theta(t)<1$ for all $t>0$
and all $k=1,2,\cdots,n$. This completes the proof.

 \section{Proof of equation \eqref{eq:inequ-Fk(M)}}
Denote $X=R_k-R_k^*$ and $Y=S_k-S_k^*$. Then following the expression of $F_k(m)$ in \eqref{eq:Fk(M)}, we have
\begin{equation}\label{eqaa1}
\begin{split}
F_k(m)& =(\gamma+\delta)mX^2+[(\gamma-\mu)m-\delta]XY+\mu Y^2\\
      & =(\gamma+\delta)m\left[X+\frac{(\gamma-\mu)m-\delta}{2(\gamma+\delta)m}Y\right]^2+2\omega(m)Y^2 \\
      & =\mu \left[Y+\frac{(\gamma-\mu)m-\delta}{2\mu}X\right]^2+2\nu(m)X^2,
\end{split}
\end{equation}
where $\omega(m)=-\frac{\triangle(m)}{8(\gamma+\delta)m}$ and $\nu(m)=-\frac{\triangle(m)}{8\mu}$ with
\begin{equation}\label{eqaa2}
  \begin{split}
    \triangle(m) & =[(\gamma-\mu)m-\delta]^2-4\mu(\gamma+\delta)m \\
       & = (\gamma-\mu)^2 m^2-[2\delta(\gamma-\mu)+4\mu(\gamma+\delta)]m+\delta^2.
  \end{split}
\end{equation}
We now show there exists $m^*>0$ such that $\triangle(m^*)<0$.  We prove this by two parts.
\begin{enumerate}
  \item If $\mu=\gamma$, we can choose $m^*=\frac{\delta}{4\mu}>0.$ It is easy to see $\triangle(m^*)=-\gamma\delta < 0.$
  \item If $\mu \neq \gamma$, let $m^*=\frac{\delta\gamma+\mu\delta+2\mu\gamma}{(\gamma-\mu)^2}>0$, obviously we have

      \begin{equation}\label{eq11}
      \triangle(m^*)=-\frac{4\mu\gamma(\gamma+\delta)(\delta+\mu)}{(\gamma-\mu)^2}<0.
      \end{equation}

\end{enumerate}

Thus the above claim is proved. So we have $\omega(m^*)>0$ and $\nu(m^*)>0$, from which we have
\begin{equation}\label{eqb1}
  F_k(m^*) \geq \omega(m^*) (S_k-S_k^*)^2+\nu(m^*)(R_k-R_k^*)^2,
\end{equation}
with
\begin{equation}\label{eq:B5}
    \omega(m^*)=-\frac{\triangle(m^*)}{8(\gamma+\delta)m^*},
\end{equation}
and
\begin{equation}\label{eq:B6}
   \nu(m^*)=-\frac{\triangle(m^*)}{8\mu}.
\end{equation}

\end{appendices}

%

\bibliographystyle{abbrv}
\bibliography{nmsirs}
\end{document}